\definecolor{see}{RGB}{67,75,179}
\definecolor{darksee}{RGB}{42,44,148}
\definecolor{honey}{RGB}{232,180,129}
\definecolor{lighthoney}{RGB}{255,254,220}
\definecolor{citecol}{rgb}{0.5,0,0} 
\newcommand\bibname{References}%
\def\bibfont{\fontsize{10}{11}\selectfont}
\newdimen\bibindent
\def\@biblabel#1{[#1]}
\renewenvironment{thebibliography}[1]
     {\section*{\bibname}\bibfont%
      \list{\@biblabel{\arabic{enumiv}}}%
           {\settowidth\labelwidth{\@biblabel{#1}}%
            \leftmargin\labelwidth
	    \labelsep6pt
            \advance\leftmargin\labelsep
	    \itemsep3pt\parsep2pt
            \usecounter{enumiv}%
            }%
      \sloppy\clubpenalty4000\widowpenalty4000%
      \sfcode`\.=\@m}
     {\def\@noitemerr
       {\@latex@warning{Empty `thebibliography' environment}}%
      \endlist}
\newcommand{\Tens}{\mathfrak{Tens}}
\newcommand{\CE}{\mathfrak{CE}}  
\newcommand{\E}{\mathfrak{E}}
\newcommand{\C}{\mathfrak{C}}
\newcommand{\BV}{\mathfrak{BV}}
\newcommand{\fA}{\mathfrak{A}}
\newcommand{\D}{\mathfrak{D}}
\newcommand{\F}{\mathfrak{F}}
\newcommand{\fG}{\mathfrak{G}}
\newcommand{\Lor}{\mathfrak{Lor}}
\newcommand{\frakg}{\mathfrak{g}}
\newcommand{\X}{\mathfrak{X}}
\newcommand{\CEcal}{\mathcal {CE}}
\newcommand{\Gcal}{\mathcal{G}}  
\newcommand{\Kcal}{\mathcal{K}}  
\newcommand{\Ccal}{\mathcal{C}}
\newcommand{\Dcal}{\mathcal{D}}
\newcommand{\Ecal}{\mathcal{E}} 
\newcommand{\Fcal}{\mathcal{F}} 
\newcommand{\BVcal}{\mathcal{BV}} 
\newcommand{\Ncal}{\mathcal{N}}
\newcommand{\Mcal}{\mathcal{M}}
\newcommand{\Ocal}{\mathcal{O}}
\newcommand{\Scal}{\mathcal{S}}
\newcommand{\Tcal}{\mathcal{T}}
\newcommand{\Vcal}{\mathcal{V}}
\newcommand{\Zcal}{\mathcal{Z}}
\newcommand{\Ci}{\mathcal{C}^\infty} 
\newcommand{\obj}{\mathrm{Obj}}
\newcommand{\Nat}{\mathrm{Nat}}
\newcommand{\Loc}{\mathrm{\mathbf{Loc}}}       
\newcommand{\Obs}{\mathrm{\mathbf{Obs}}}       
\newcommand{\Top}{\mathrm{\mathbf{Top}}}          
\newcommand{\Vect}{\mathrm{\mathbf{Vec}}}       
\newcommand{\CnMfd}{\mathrm{\mathbf{CnMfd}}}
\newcommand{\LcMfd}{\mathrm{\mathbf{LcMfd}}}
\newcommand{\WF}{\mathrm{WF}}         
\newcommand{\id}{\mathrm{id}}               
\DeclareMathOperator{\supp}{\mathrm{supp}}      
\newcommand{\relsupp}{\mathrm{rel\, supp}}      
\newcommand{\Diff}{\mathrm{Diff}}        
\newcommand{\im}{\mathrm{Im}}             
\newcommand{\tr}{\mathrm{tr}}                 
\newcommand{\loc}{\mathrm{loc}}
\newcommand{\inv}{\mathrm{inv}}
\newcommand{\reg}{\mathrm{reg}}
\newcommand{\pg}{\mathrm{pg}}
\newcommand{\af}{\mathrm{af}}
\newcommand{\ta}{\mathrm{ta}}
\newcommand{\gh}{\mathrm{gh}}
\newcommand{\mc}{{\mu\mathrm{c}}}
\newcommand{\ml}{\mathrm{ml}}
\newcommand{\ex}{\mathrm{ext}}
\newcommand{\NN}{\mathbb{N}}          
\newcommand{\RR}{\mathbb{R}}           
\newcommand{\M}{\mathbb{M}} 	     
\newcommand{\al}{\alpha}
\newcommand{\bet}{\beta}
\newcommand{\Ga}{\Gamma}
\newcommand{\De}{\Delta}
\newcommand{\la}{\lambda}
\newcommand{\La}{\Lambda}
\newcommand{\ph}{\varphi}
\newcommand{\T}{\cdot_{{}^\Tcal}}
\newcommand{\TR}{\cdot_{{}^{\TTR}}}
\newcommand{\TT}{\Tcal}
\newcommand{\TTR}{\Tcal_H}
\newcommand{\qme}{{\textsc{qme}}}
\newcommand{\cme}{{\textsc{cme}}}
\newcommand{\mwi}{{\textsc{mwi}}}
\newcommand{\di}{\textrm{div}}
\newcommand{\ds}{\textrm{div}^*}
\newcommand{\sst}[1]{\scriptscriptstyle{#1}}  
\newcommand{\vr}[1]{\boldsymbol{#1}}         
\newcommand{\hinv}{*^{\!\sst{-\!1}}}             
\newcommand{\minus}{\sst{-1}}   
\newcommand{\be}{\begin{equation}}
\newcommand{\ee}{\end{equation}}
\newcommand{\Lap}{\bigtriangleup}
\newcommand{\os}{\stackrel{\mathrm{o.s.}}{=}}
\newcommand{\otoprule}{\midrule[\heavyrulewidth]}
\newcommand{\Pei}[2]{\lfloor #1, #2 \rfloor}
\newcommand{\skal}[2]{\left< #1 , #2 \right>}
\newcommand{\dgr}{{\sst\ddagger}}
\def\normOrd#1{\mathop{:}\nolimits\!#1\!\mathop{:}\nolimits}
\author{\null\\Romeo Brunetti$^{(1)}$, Klaus Fredenhagen$^{(2)}$, Katarzyna Rejzner$^{(3)}$ \\
  \null\\
  \null\\
  \small{$^{(1)}$Department of Mathematics, University of Trento}\\
        \small{$^{(2)}$ II Institute for Theoretical Physik, University of Hamburg}\\
\small{$^{(3)}$ Department of Mathematics, University of York}\\
\small{\texttt{romeo.brunetti@unitn.it,
klaus.fredenhagen@desy.de,}}\\ 
\small{\texttt{kasia.rejzner@york.ac.uk }}}
\title{Quantum gravity from the point of view of \\locally covariant quantum field theory }
\begin{document}
 \sloppy
\date{}
 \maketitle

  \theoremstyle{plain}
  \newtheorem{df}{Definition}[section]
  \newtheorem{thm}[df]{Theorem}
  \newtheorem{prop}[df]{Proposition}
  \newtheorem{cor}[df]{Corollary}
  \newtheorem{lemma}[df]{Lemma}
  \newtheorem{exa}[df]{Example}
  
  \theoremstyle{plain}
  \newtheorem*{Main}{Main Theorem}
  \newtheorem*{MainT}{Main Technical Theorem}

  \theoremstyle{definition}
  \newtheorem{rem}[df]{Remark}
 \theoremstyle{definition}
  \newtheorem{ass}{\underline{\textit{Assumption}}}[section]

\hspace{5mm} 
\begin{center}
\textit{Dedicated to Roberto Longo on the occasion of his 60\textsuperscript{th} birthday}\\
\end{center}
\hspace{1mm} 
\begin{abstract}
We construct perturbative quantum gravity in a generally covariant way. In particular our construction is background independent. It is based on the locally covariant approach to quantum field theory and the renormalized Batalin-Vilkovisky formalism. We do not touch the problem of nonrenormalizability and interpret the theory as an effective theory at large length scales. 
\end{abstract}
\tableofcontents
\markboth{Contents}{Contents}
\section{Introduction}
The incorporation of gravity into quantum theory is one of the great challenges of physics. The last decades were dominated by attempts to reach this goal by rather radical new concepts, the best known being string theory and loop quantum gravity. A more conservative approach via quantum field theory was originally considered to be hopeless because
of severe conceptual and technical problems. In the meantime it became clear that also the other attempts meet enormous problems, and it might be worthwhile to reconsider the quantum field theoretical approach. Actually, there are indications that the obstacles in this approach are less heavy than originally expected. 

One of these obstacles is perturbative non-renormalizability \cite{Sagnotti,vandeVen:1991gw} which actually means that the counter-terms arising in higher order of perturbation theory cannot be taken into account by readjusting the parameters in the Lagrangian. Nevertheless, theories with this property can be considered as effective theories with the property that only finitely many parameters have to be considered below a fixed energy scale \cite{Weinberg}. Moreover, it  may be that the theory is actually asymptotically safe in the sense that there is an ultraviolet fixed point of the renormalisation group flow with only finitely many relevant directions \cite{Weinberg79}. Results supporting this perspective have been obtained by Reuter et al. \cite{Reu98,Reu02}.

Another obstacle is the incorporation of the principle of general covariance. Quantum field theory is traditionally based on the symmetry group of Minkowski space, the Poincar\' {e} group. In particular, the concept of particles with the associated notions of a vacuum (absence of particles) and scattering states heavily relies on Poincar\' {e} symmetry.
Quantum field theory on curved spacetime which might be considered as an intermediate step towards quantum gravity already has no distinguished particle interpretation.
In fact, one of the most spectacular results of quantum field theory on curved spacetimes is Hawking's prediction of black hole evaporation \cite{Hawking:1974sw}, a result which may be understood  as a consequence of different particle interpretations in different regions of spacetime. (For a field theoretical derivation of the Hawking effect see \cite{FH}.)

Quantum  field theory on curved spacetime is nowadays well understood. This success is based on a consequent use of appropriate concepts. First of all, one has to base the theory on the principles of algebraic quantum field theory since there does not exist a distinguished Hilbert space of states. In particular, all structures are formulated in terms of local quantities. Global properties of spacetime do not enter the construction of the algebra of observables. They become relevant in the analysis of the space of states whose interpretation up to now is less well understood. It is at this point where the concept of particles becomes important if the spacetime under consideration has asymptotic regions similar to Minkowski space. Renormalization can be done without invoking any regularization by the methods of causal perturbation theory \cite{EG}. Originally these methods made use of properties of a Fock space representation, but could  be generalized to a formalism based on algebraic structures on a space of functionals of classical field configurations where the problem of singularities  can be treated by methods of microlocal analysis \cite{BFK95,BF0,HW}. The lack of isometries in the generic case could be a problem for a comparison of renormalisation conditions at different  points of spacetime. But this problem could be overcome by requiring local covariance, a principle, which relates theories at different spacetimes. The arising theory is already generally covariant and includes all typical quantum field theoretical models with the exception of supersymmetric theories (since supersymmetry implies the existence of a large group of isometries (Poincar\' {e} group or Anti de Sitter group)). See \cite{BFV,BDF} for more details.

It is the aim of this paper to extend this approach to gravity. But here there seems to be a conceptual obstacle. As discussed above, a successful treatment of quantum field theory on generic spacetimes requires the use of local observables, but unfortunately there are no diffeomorphism invariant localized functionals of the dynamical degrees of freedom (the metric in pure gravity). 
Actually, this creates in addition to technical complications also a problem for the interpretation. Namely, Nakanishi \cite{Naka,Nak78} uses the distinguished background for a formal definition of an S-matrix, and one could base an interpretation of the formalism in terms of the S-matrix provided it exists. But an interpretation based on the S-matrix is no longer possible for generic backgrounds.
 Often this difficulty is taken as an indication that a quantum field theoretical treatment of quantum gravity is impossible. We propose a solution of this problem by the concept of relative observables introduced by Rovelli in the framework of loop quantum gravity \cite{Rovelli:2001bz} and later used and further developed  in \cite{Dittrich:2005kc,Thiemann:2004wk}.
The way out is to replace the requirement of invariance by covariance. We associate observables to spacetime subregions in a locally covariant way (compare with \cite{BFV,HW}). Such observables transform equivariantly under diffeomorphism transformations, but the relations between them are diffeomorphism invariant.

Because of its huge group of symmetries the quantization of gravity is plagued by problems known from gauge theories, and a construction seems to require the introduction of redundant quantities which at the end have to be removed. In perturbation theory the Batalin-Vilkovisky (BV) approach \cite{Batalin:1977pb,Batalin:1981jr} has turned out to be the most systematic method, generalizing the BRST approach \cite{BRST1,BRST2,Tyu}. 
In a previous paper \cite{FR} two of us performed this construction for classical gravity, and in another paper \cite{FR3} we developed a general scheme for a renormalized BV formalism for quantum physics, based on previous work of Hollands on Yang-Mills theories on curved spacetimes \cite{H}  and of Brennecke and D\"utsch on a general treatment of anomalies \cite{BreDue}. In the present paper it therefore suffices to check whether the assumptions used in the general formalism are satisfied in gravity.

In  the BV approach one constructs at the end the algebra of observables as a cohomology of a certain differential. But here the absence of local observables shows up in the triviality of the corresponding cohomology, as long as one restricts the formalism to local functionals of the perturbation metric on a fixed spacetime. A nontrivial cohomology class arises on the level of locally covariant fields which are defined simultaneously on all spacetimes. This is solved by relaxing the locality assumption a bit, and considering the relational observables.  

The paper is organized as follows. We first describe the functional framework for classical field theory adapted to gravity. This framework was in detail developed in \cite{BFR} but many ideas may already be found in the work of DeWitt \cite{DeWitt:2003pm}, and an earlier version  is \cite{DF}. In this framework, many aspects of quantum gravity can be studied, in particular the gauge symmetry induced by general covariance. 

As already discussed in \cite{FR}, the candidates for local observables are locally covariant fields which act simultaneously on all spacetimes in a coherent way. Mathematically, they can be defined as natural transformations between suitable functors (see \cite{BFV}). It seems, however, difficult to use them directly as generators of an algebra of observables for quantum theory (for attempts see \cite{Few} and \cite{Rej11b,FR}). Moreover, the action of the BV operator on such locally covariant quantum fields $\Phi$ involves an additional term, which cannot be generated by the antibracket \cite{FR}. We therefore take a different path here and, on a generic background spacetime $\Mcal=(M,g_0)$, we evaluate fields $\Phi_{\Mcal}$ on test functions of the form $f=\vr{f}\circ X_{g_0+h}$, where in the simplest situation $\vr{f}:\RR^4\rightarrow \RR$ and $X_{g_0+h}^{\mu}$, $\mu=0,\ldots,3$ are coordinate fields constructed as scalar curvature invariants depending on the  full metric $g=g_0+h$. We interpret the obtained diffeomorphism invariant quantities as relative observables, similar to concepts developed in loop quantum gravity \cite{Rovelli:2001bz,Dittrich:2005kc,Thiemann:2004wk}. 

More generally, in the absence of an intrinsic choice of a coordinate system the physical interpretation is based on the relations between different observables. In suitable cases some of them could be thought of as coordinates but this is not necessary for a physical interpretation. This variant of the proposed formalism is discussed in section \ref{abstract:view}.

The algebra generated by the relative observables is subsequently quantized with the use of the BV formalism. For the purposes of perturbation theory we replace the diffeomorphism group by the Lie algebra of vector fields, so the ``gauge invariance'' is in our framework the invariance under infinitesimal diffeomorphisms realized through the Lie derivative. The quantization proceeds following the paradigm proposed in \cite{FR3}. Firstly, we extend the algebra of relative observables with auxiliary objects like ghosts, antifields, etc. and add appropriate terms to the action (section \ref{BVcomplex}). The final outcome of this procedure is a graded differential algebra $(\BVcal(\Mcal),s)$, where $s$ is the classical BV differential, and the extended action $S^\ex$ such that $s$ is locally generated by the antibracket (the Schouten bracket on $\BVcal
(\Mcal)$). In section \ref{quantization} we quantize the extended theory using methods of perturbative algebraic quantum field theory (pAQFT).  In the intermediate steps we need to split the interaction (around the background metric $g_0$) into the free part $S_0$ and the interaction term $S_I$. First, we quantize the free part by choosing a Hadamard solution of the linearized Einstein equation. We then can apply the renormalized BV formalism as developed in \cite{FR3}. A crucial role is played by the M{\o}ller map which maps interacting fields to free ones. In particular it also intertwines the free BV differential with that of the interacting theory.

We then show that the theory is background independent (section \ref{background}), in the sense that a localized change in the background which formally yields an automorphism on the algebra of observables (called relative Cauchy evolution in \cite{BFV}) is actually trivial, in agreement with the proposal made in \cite{BF1} (see also \cite{FR2}).

We sketch how to construct states on the algebra of observables, using the perturbative ansatz of \cite{DF99}. In the first step one constructs a pre-Hilbert representation of linearized theory and the subspace of vectors with positive inner product is distinguished as the cohomology of the free BRST charge $Q_0$. We refer to the literature where such construction was achieved on some special classes of spacetimes \cite{Fewster:2012bj,BDM}. In the next step we construct the representation of the full theory on the space $\Kcal$ of formal power series in $\hbar$ and the coupling constant $\la$ with coefficients on $\Kcal_0$. The positive subspace is then recovered as the cohomology of the full interacting BRST charge as proposed in \cite{DF99}. The consistency of this approach with the BV formalism has been discussed in \cite{Rej13}.
\section{Classical theory}    
\subsection{Configuration space of the classical theory}\label{conf:space}
We start with defining the kinematical structure which we will use to describe the gravitational field. We follow \cite{FR}, where the classical theory was formulated in the locally covariant framework. To follow this approach we need to define some categories. Let $\Loc$ be the category of  time-oriented globally hyperbolic spacetimes with causal isometric embeddings
 as morphisms. The configuration space of classical gravity is a subset of the space of Lorentzian metrics, which can be equipped with an infinite dimensional  manifold structure. To formulate this in the locally covariant framework we need to introduce a category, whose objects are infinite dimensional manifolds and whose arrows are smooth injective linear maps. There are various possibilities to define this category. One can follow \cite{Ham} and use the category  $\LcMfd$ of differentiable manifolds modeled on locally convex vector spaces or use the more general setting of convenient calculus, proposed in \cite{Michor}. The second of these possibilities allows one to define a notion of smoothness, where a map is smooth if it maps smooth curves into smooth curves. We will denote by  $\CnMfd$, the category of smooth manifolds that arises in the convenient setting. Actually, as far as the definition of the configuration space goes, these two approaches are equivalent. This was already discussed in details in \cite{BFR}, for the case of a scalar field and the generalization to higher rank tensor is straightforward. Let $\Lor(M)$ denote the space of Lorentzian metrics on $M$. We can equip it with a partial order relation  $\prec$ defined by:
 \be\label{partial:order}
 g' \prec g\  \textrm{if}\  g'(X,X) \geq 0\  \textrm{implies}\  g(X,X) > 0\,,
 \ee
 i.e. the closed lightcone of $g'$ is contained in the lighcone of $g$. Note that, if $g$ is globally hyperbolic, then so is $g'$. 
 We are now ready to define a functor $\E:\Loc\rightarrow\LcMfd$ that assigns to a spacetime, the classical configuration space. To an object $\Mcal=(M,g_0)\in \obj(\Loc)$ we assign
 \be\label{config}
\E(\Mcal)\doteq \{g\in \Lor(M)|\,g\prec g_0\}\, .
 \ee
 Note that, if $g_0$ is globally hyperbolic, then so is $g\in\E(M,g_0)$. 
 The spacetime $(M,g)$ is also an object of $\Loc$, since it inherits the orientation and time-orientation from $(M,g_0)$. A subtle point is the choice of a topology on $\E(\Mcal)$. Let $\Gamma((T^*M)^{\otimes2})$ be the space of smooth contravariant 2-tensors. We equip it with the topology $\tau_W$, given by open neighborhoods of the form $U_{g,V}=\{g+h,h\in V\textrm{ open in }\Gamma_c((T^*M)^{\otimes2})\}$. It turns out that $\E(\Mcal)$ is an open subset of  $\Gamma((T^*M)^{\otimes2})$ with respect to $\tau_W$ (for details, see the Appendix \ref{class} and \cite{BFR}). The topology $\tau_W$ induces on $\E(\Mcal)$ a structure of an infinite dimensional manifold modeled on the locally convex vector space $\Gamma_c((T^*M)^{\otimes 2})$, of compactly supported contravariant 2-tensors. The coordinate
chart associated to $U_{g,V}$ is given by $\kappa_{g}(g + h) = h$. Clearly, the coordinate change map between two charts is affine, so $\E(\Mcal)$ is an affine manifold. It was shown in \cite{BFR} that $\tau_W$ induces on the configuration space also a smooth manifold structure, in the sense of the convenient calculus \cite{Michor}, so $\E$ becomes a contravariant functor from $\Loc$ to $\CnMfd$ where morphisms $\chi$ are mapped to pullbacks $\chi^*$.
 \subsection{Functionals}
Let us now proceed to the problem of defining observables of the theory. We first introduce functionals $F:\E(\Mcal)\to\RR$, which are smooth in the sense of the calculus on locally convex vector spaces  \cite{Ham,Neeb} (see Appendix \ref{class} for details). 
In particular, the definition of smoothness which we use implies that for all $g\in\E(\Mcal)$, $n\in\NN$, $F^{(n)}(g)\in \Gamma'((T^*M)^n)$, i.e. it is a distributional section with compact support. Later, beside functionals, we will also need vector fields on $\E(\Mcal)$.
Since the manifold structure of $\E(\Mcal)$ is affine, the tangent and cotangent bundles are trivial and are given by: $T \E(\Mcal) = \E(\Mcal) \times \Gamma_c((T^*M)^{\otimes 2})$, $T^* \E(\Mcal) = \E(\Mcal) \times \Gamma'_c((T^*M)^{\otimes 2})$. By a slight abuse of notation we denote the space  $\Gamma_c((T^*M)^{\otimes 2})$ by $\E_c(\Mcal)$. The assignment of 
 $\E_c(\Mcal)$ to $\Mcal$ is a covariant functor from $\Loc$ to $\Vect$ where morphisms $\chi$ are mapped to pushforwards $\chi_*$. Another covariant functor  between these categories is the functor $\D$ which associates to a manifold the space $\D(\Mcal)\doteq\Ci_0(M,\RR)$ of compactly supported functions.
 
An important property of a functional $F$ is its spacetime support. Here we introduce a more general definition than the one used in our previous works, since we don't want to rely on an additive structure of the space of configurations. To this end we need to introduce the notion of \textit{relative support}. Let $f_1,f_2$ be arbitrary functions between two sets $X$ and $Y$, then
 \[
 \relsupp (f_1,f_2)\doteq \overline{\{x\in X| f_1(x)\neq f_2(x)\}}\,.
 \]
 Now we can define the spacetime support of a functional on $\E(\Mcal)$:
 \begin{align}\label{support}
\supp\, F\doteq\{ & x\in M|\forall \text{ neighbourhoods }U\text{ of }x\ \exists h_1,h_2\in\E(M),\\ &  \relsupp(h_1,h_2)\subset U 
\text{ such that }F(h_1)\not= F(h_2)\}\ .\nonumber
\end{align}
Another crucial property is \textit{additivity}. 
\begin{df}
Let $h_1, h_2, h_3\in\E(\Mcal)$, such that $\relsupp(h_1,h_2)\cap\relsupp(h_1,h_3)=\varnothing$. By definition of the relative support we have $h_3\!\upharpoonright_{U}=h_2\!\upharpoonright_{U}$, where $U\doteq (\relsupp(h_1,h_2))^c\cap (\relsupp(h_1,h_3))^c$ and the superscript $c$ denotes the complement in $M$. We can therefore define a function $h$ by setting
\[
h= h_3\!\upharpoonright_{(\relsupp(h_1,h_2))^c},\quad h= h_2\!\upharpoonright_{(\relsupp(h_1,h_3))^c}\,,
\]
We say that $F$ is additive if 
\be\label{add}
F(h_1)=F(h_2)+F(h_3)-F(h)\quad \textrm{holds.}
\ee
\end{df}
A smooth compactly supported functional is called \textit{local} if it is additive and, for each $n$,
the wavefront set of $F^{(n)}(g)$ satisfies: $\textrm{WF}(F^{(k)}(g))\perp T\textrm{Diag}^k(M)$ with the thin diagonal $\textrm{Diag}^k(M)\doteq\left\{(x,\ldots,x)\in M^k:x\in M\right\}$. In particular $F^{(1)}(g)$ has to be  a smooth section for each fixed  $g$. From the additivity property follows that  $F^{(n)}(g)$ is supported on the thin diagonal.
 The space of compactly supported smooth local functions $F:\E(\Mcal)\to\RR$ is denoted by $\F_\loc(\Mcal)$. The algebraic completion of $\F_\loc(\Mcal)$ with respect to the pointwise product
\be\label{prod}
F\cdot G(g)=F(g)G(g)
\ee
is a commutative algebra $\F(\Mcal)$ consisting of sums of finite products of local functionals. We call it the algebra of multilocal functionals.
 $\F$ becomes a (covariant) functor by setting $\F\chi(F)=F\circ \E\chi$, i.e. $\F\chi(F)(g)=F(\chi^*g)$.
\subsection{Dynamics}
Dynamics is introduced by means of a generalized Lagrangian $L$ which is a natural transformation between the functor of test function spaces $\D$ and the functor $\F_\loc$ satisfying
\be\label{L:supp}
\supp(L_\Mcal(f))\subseteq \supp(f)\,,\qquad \forall\, \Mcal\in\obj(\Loc),f\in\D(\Mcal)\,,
\ee
and the additivity rule 
\be\label{L:add}
L_\Mcal(f_1+f_2+f_3)=L_\Mcal(f_1+f_2)-L_\Mcal(f_2)+L_\Mcal(f_2+f_3)\,,
\ee
for $f_1,f_2,f_3\in\D(\Mcal)$ and $\supp\,f_1\cap\supp\,f_3=\emptyset$.  
The action $S(L)$ is defined as an equivalence class of Lagrangians  \cite{BDF}, where two Lagrangians $L_1,L_2$ are called equivalent $L_1\sim L_2$  if
\be\label{equ}
\supp (L_{1,\Mcal}-L_{2,\Mcal})(f)\subset\supp\, df\,, 
\ee
for all spacetimes $\Mcal$ and all $f\in\D(\Mcal)$. In general relativity the dynamics is given by the Einstein-Hilbert Lagrangian:
\be\label{EH}
L^{\sst EH}_{\Mcal}(f)(g)\doteq \int R[g]f\,d\mu_{ g},\quad g\in\E(\Mcal)\,,
\ee
where we use the Planck units, so in particular the gravitational constant $G$ is set to 1.
\subsection{Diffeomorphism invariance}
In this subsection we discuss the symmetries of \eqref{EH}. 
As a natural transformation $L^{\sst EH}$ is an element of $\Nat(\Tens_c,\F)$,\footnote{Both $\Tens_c$ and $\F$ have to be treated as functors into the same category. In \cite{BFV} this category  is chosen to be $\Top$, the category of topological spaces, but in the present context it is more natural to include some notion of smoothness. A possible choice is the category of convenient vector spaces \cite{Michor}.} where 
$\Tens_c(\Mcal)\doteq\bigoplus_{k}\Tens^k_c(\Mcal)$ and $\Tens_c(\Mcal)$ is the space of smooth compactly supported sections of the vector bundle $\bigoplus_{m,l}(TM)^{\otimes m}\otimes(T^*M)^{\otimes l}$. The space $\Nat(\Tens_c,\F)$ is quite large, so, to understand the motivation for such an abstract setting, let us now discuss the physical interpretation of $\Nat(\Tens_c,\F)$. In \cite{FR} we argued that this space contains quantities which are identified with diffeomorphism invariant partial observables of general relativity, similar to the approach of \cite{Rovelli:2001bz,Dittrich:2005kc,Thiemann:2004wk}. Let $\Phi\in \Nat(\Tens_c,\F)$. A test tensor $f\in \Tens_c(\Mcal)$ corresponds to a concrete geometrical setting of an experiment, so we obtain a functional $\Phi_\Mcal(f)$, which depends covariantly on the geometrical data provided by $f$. We allow arbitrary tensors to be test objects, because we don't want to restrict \textit{a priori} possible experimental settings. A simple example of an experiment is the length measurement, studied in detail in \cite{Ohl}. 
\begin{exa}\label{length}
Let $S:[0,1]\rightarrow \RR^4$, $\lambda\mapsto s(\la)$ be a spacelike curve in Minkowski space $\Mcal=(\RR^4,\eta)$. For $g=\eta+h\in\E(\Mcal)$ the curve is still spacelike, and its length is 
\[
\La_{g}(S)\doteq \int_0^1 \sqrt{|g_{\mu\nu}(s)\dot{s}^\mu\dot{s}^\nu|} d\la\,.
\]
Here $\dot{s}^\mu$ is the tangent vector of $s$. We write it as $\dot{s}^\mu=\dot{s} e^\mu$, with $\eta_{\mu\nu} e^\mu e^\nu=-1$. Expanding the formula above in powers of $h$ results in
\[
\La_{g}(S)=\sum_{n=0}^\infty (-1)^n\binom{\tfrac{1}{2}}{n} \int_0^1h_{\mu_1\nu_1}(s)\dots h_{\mu_n\nu_n}(s)\dot{s}e^{\mu_1} e^{\nu_1} \dots e^{\mu_n} e^{\nu_n} d\la\,.
\]
Now, if we want to measure the length up to the $k$-th order, we have to consider a field
\[
\La_{\Mcal}(f_S)(h)=\int f_{S,0}^{\mu\nu}\eta_{\mu\nu} d^4x +\int f_{S,1}^{\mu\nu}h_{\mu\nu}d^4x +\ldots+\int f_{S,k}^{\mu_1\nu_1\dots \mu_k\nu_k}h_{\mu_1\nu_1}\dots h_{\mu_k\nu_k}  d^4x\,,
\]
where the curve, whose length we measure, is specified by the test tensor $f_S=(f_{S,0},\dots,f_{S,k})\in \Tens_c(\Mcal)$, which depends on the parameters of the curve in the following way:
\begin{align*}
 f^{\mu_1\nu_1\dots \mu_k\nu_k}_{S,k}(x)&=(-1)^k\binom{\tfrac{1}{2}}{k}\int_0^1\delta(x-s(\la))\dot{s}e^{\mu_1}e^{\nu_1}\dots e^{\mu_k}e^{\nu_k}d\la,\quad k\geq1 \,,\\
f^{\mu\nu}_{S,0}(x)&=-\int_0^1\delta(x-s(\la))\dot{s}e^{\mu}e^{\nu}d\la\,.
\end{align*}
\end{exa}
The framework of category theory, which we are using, allows us also to formulate the notion of locality in a simple manner. It was shown in \cite{BFR} that natural transformations $\Phi\in\Nat(\Tens_c,\F)$, which are additive in test tensors (condition \eqref{L:add}) and satisfy the support condition \eqref{L:supp}, correspond to local measurements, i.e. $\Phi_{\Mcal}(f)\in\F_\loc(\Mcal)$.
The condition for a family $(\Phi_{\Mcal})_{\Mcal\in\obj(\Loc)}$ to be a natural transformation reads
\[
\Phi_{\Mcal'}(\chi_*f)(h)=\Phi_{\Mcal}(f)(\chi^*h)\,,
\]
where $f\in\Tens_c(\Mcal)$, $h\in\E(\Mcal')$, $\chi:\Mcal\rightarrow\Mcal'$. Now we want to introduce a BV structure on natural transformations defined above. One possibility was proposed in \cite{FR}, where
 an associative, commutative product was defined as follows:
\be\label{ntprod}
(\Phi\Psi)_{\Mcal}(f_1,...,f_{p+q})=\frac{1}{p!q!}\sum\limits_{\pi\in P_{p+q}}\Phi_{\Mcal}(f_{\pi(1)},...,f_{\pi(p)})\Psi_{\Mcal}(f_{\pi(p+1)},...,f_{\pi(p+q)})\, .
\ee
Note, however, that the dependence on test tensors $f_i$ physically corresponds to a geometrical setup of an experiment, so $\Phi_{\Mcal}(f_1)\Psi_{\Mcal}(f_2)$ means that, on a spacetime $\Mcal$, we measure the observable $\Phi$ in a region defined by $f_1$ and $\Psi$ in the region defined by $f_2$. From  this point of view, there is no \textit{a priori} reason to consider products of fields which are symmetric in test functions. Therefore, we take here a different approach and replace the collection of natural transformations with another structure. Let us fix $\Mcal$.
We have already mentioned that the test function specifies the geometrical setup for an experiment, but a concrete choice of $f\in \D(\Mcal)$ can be made only if we fix some coordinate system\footnote{In general, it is more natural to work with a frame instead of a coordinate system, but we leave this problem for future study.}. This is related to the fact that, physically, points of spacetime have no meaning. To realize this in our formalism we have to allow for a freedom of changing the labeling of the points of spacetime. From now on we restrict the class of objects of $\Loc$ to spacetimes which admit a global coordinate system. 
Following ideas of Nakanishi \cite{Naka,Nak78} we realize the choice of a coordinate system by introducing four scalar fields $X^\mu$, which will parametrize points of spacetime. We can now consider the metric as a function of $X^\mu$, $\mu=0,\ldots,3$, i.e.  we write 
\[
g(x)=\sum\limits_{\nu,\mu}\vr{g}_{\mu\nu}(X(x))(dX^\mu  \otimes_s dX^\nu)(x)\,,
\] 
where  $\vr{g}$ is a function $\vr{g}:\RR^4\rightarrow\RR^{10}$, which represents $g\in\E(\Mcal)\doteq\Gamma((T^*M)^{\otimes2 })$ in the coordinate system induced by $X$, and we use the notation $g=X^*\vr{g}$. Let $\C(\Mcal)$ denote the space of global coordinate systems. We can write any test tensor  $f\in\Tens_c(\Mcal)$ in the coordinate basis induced by $X\in\C(\Mcal)$, so if we fix $\vr{f}\in\RR^k\rightarrow\RR^l$ for appropriate dimensions $k$ and $l$, then the change of $f=X^*\vr{f}$ due to the change of the coordinate system is realized through the change of scalar fields $X^\mu$. For a natural transformation $\Phi\in\Nat(\Tens_c,\F)$ we obtain a map
\[
\Phi_{\Mcal\vr{f}}(g,X)\doteq \Phi_\Mcal(X^*\vr{f})(g)\,,
\]
As long as $\Mcal$ is fixed, we will drop $\Mcal$ in $\Phi_{\Mcal\vr{f}}$ and use the notation $\Phi_{\vr{f}}$ instead.
The Einstein-Hilbert action induces a map
\[
L^{\sst EH}_{\vr{f}}(g,X)=\int_M R[g](x)\vr{f}(X(x))d\mu_{g}(x).
\]
For now we treat $g$ as a dynamical variable and $X^\mu$ are treated as external fields. Note that in the fixed coordinate system $X$ the components of $\vr{g}$ satisfy the condition:
\be\label{cond}
\frac{1}{\sqrt{-\vr{g}}}\tfrac{\partial}{\partial X^\mu}(\sqrt{-\vr{g}}\vr{g}^{\mu\nu})\circ X=\Box_{g}X^\mu\,,
\ee
%
Let us now consider the transformation of $g$ and $X$ under diffeomorphisms. Let $\al\in\Diff(M)$, then the transformed coordinate system is given by $X'(x)=X(\al(x))$ and the transformed $g$ is the pullback $\al^*g$. Infinitesimally, the transformation of the metric is given by the Lie derivative, so we define the action $\rho$ of the algebra $\X_c(\Mcal)\doteq\Gamma_c(TM)$ by
\be\label{var}
(\rho(\xi)\Phi_{\vr f})=\left<\frac{\delta \Phi_{\vr f}}{\delta g}\Big|_{X},\rho(\xi)g\right>+\left<\frac{\delta \Phi_{\vr{f}}}{\delta X^\mu}\Big|_{g},\pounds_\xi X^\mu\right>\,.
\ee
Note that in the coordinate system induced by $X$ we have $\pounds_\xi X^\beta=\vr{\xi}^\beta\circ X$, where $\vr{\xi}^\beta\circ X$ is understood as a scalar field.  
Diffeomorphism invariance of the Einstein-Hilbert Lagrangian means that
\[
\rho(\xi)L_{\vr{f}}^{\sst EH}=0\,,
\]
for $X^*\vr{f}\equiv 1$ on $\supp\,\xi$. Moreover, with this choice of  $\vr{f}$, also $\left<\frac{\delta L_{\vr{f}}^{\sst EH}}{\delta X}\Big|_{g},\pounds_\xi X\right>=0$, so we have two symmetries of the action:
\begin{align}
\rho_1(\xi)&=\left<\frac{\delta}{\delta g}\Big|_{X},\rho(\xi)g\right>\,,\label{sym1}\\
\rho_2(\xi)&=\left<\frac{\delta}{\delta X}\Big|_{g},\pounds_\xi X\right>\, .\label{sym2}
\end{align}
The first of these symmetries is a dynamical local symmetry and we will see later on that it causes the failure of the field equations to be normally hyperbolic. The other symmetry is non-dynamical and it involves variation with respect to the external fields $X^\mu$. Although the action is invariant under both of these symmetries, the diffeomorphism invariance of observables is the weaker requirement that functionals are invariant under the sum of these symmetries, i.e. they satisfy
\be\label{diffeo0}
\rho(\xi)\Phi_{\vr{f}}=0\,.
\ee
This corresponds exactly to the invariance condition for natural transformations, proposed in \cite{FR}, since the second term implements the action of infinitesimal diffeomorphisms on the test function. Our notion of diffeomorphism invariant objects is similar to the notion of gauge BRS invariant observables of gravity proposed by Nakanishi in \cite{Naka,Nak78} (see also \cite{NaOji}). The author makes there a distinction between the intrinsic BRS transformation and the total BRS transformation. The latter corresponds to our $\rho_1$, whereas the former corresponds to $\rho=\rho_1+\rho_2$, if one restricts oneself to test objects, which are scalar densities. In general the intrinsic BRS operator, as proposed by Nakanishi, has no geometrical meaning on the classical level and on the quantum level cannot be implemented by commutator with a local charge. Therefore, we do not follow this approach, but instead we make the coordinates $X$ dynamical. This is discussed in the next section.
\subsection{Metric-dependent coordinates}\label{metricdep}
Up to now we have considered the coordinates $X$ to be external fields independent of the metric. As a consequence, the diffeomorphism transformation \eqref{var} involves the term where variation with respect to $X^\mu$ is present. To avoid this, we can replace $X^\mu$ with some scalars $X_g^\mu$, $\mu=0,\ldots,3$, which depend locally on the metric. The particular choice of these fields is not relevant for the present discussion. They could be, for example, scalars constructed from the Riemann curvature tensor and its covariant derivatives (see  \cite{Khavkine:2015fwa}, which uses the earlier work of \cite{Berg,BergKom}). The caveat is that some particularly symmetric spacetimes do not admit such metric dependent coordinates, since in such cases the curvature might vanish (for a detailed discussion see \cite{CHP09,CH10}). This is however a non-generic case and in the situation where we are interested in, pure gravity without matter fields, such spacetimes are physically not observable. If matter fields are present, one can construct $X^\mu$'s using them. A known example is the Brown-Kucha\v{r} model \cite{BrK}, which uses dust fields. Here we briefly discuss a similar Ansatz, where the gravitational field is coupled to 4 scalar massless fields. We add to the Einstein-Hilbert action a term of the form
\[
L^{\sst KG}(f)(g,\phi^0,\dots,\phi^3)=\sum_{\al=0}^{3}\int_M (\nabla_g \phi^\al)^2 d\mu_{g}.
\]
The additional scalar fields satisfy the equations of motion
\[
\Box_g \phi^\al=0,\ \al=0,\dots,3\,.
\]
Classically, we can now identify the coordinate fields with the matter fields $\phi^\al$, i.e. we set $X_{g,\phi}^\mu=\phi^\mu$, $\mu=0,\dots,3$. With quantization in mind, we make the split of $g$ and $\phi^\al$ into background and perturbations, which will subsequently be treated as quantum fields. We set $g=g_0+\la h$ and $\phi^\al=\ph_0^\al+\la \ph^\al$. Our gauge-invariant observables are of the form
\[
\Phi_{\vr{f}}(h,\ph^0,\dots,\ph^3)=\Phi_{(M,g_0)}(\phi^*\vr{f})(\la h)\,,
\]
where $\phi^*\vr{f}(x)\doteq \vr{f}(\phi^0(x),\dots,\phi^3(x))$. As a concrete example consider
\[
\Phi_{\vr{f}}(h,\ph^0,\dots,\ph^3)=\int_M R_{\mu\nu\al\beta}R^{\mu\nu\al\beta}[g_0+\la h]\vr{f}((\ph_0^0+\la\ph^0)(x),\dots,(\ph_0^3+\la\ph^3)(x))d\mu_{g_0+\la h}\,,
\]
where $\ph_0^\al$ define harmonic coordinates with respect to the background metric, i.e. $\Box_{g_0}\ph_0^\al=0$, $\al=0,\dots 3$ and we choose $\vr{f}$ such that $\ph_0^*\vr{f}$ is compactly supported. The physical interpretation of the scalar fields $\phi^\al$ has to be made clear in concrete examples. We will come back to this problem in our future works.

On generic spacetimes matter fields are not necessary and it is enough to use the curvature scalars. Let us denote by $\beta$ the map $g\mapsto (X_g^0,\ldots,X_g^3)$ and we define 
\be\label{betaX}
\Phi^\beta_{\vr{f}}(g)\doteq \Phi_{\vr{f}}(g,X_g)\,.
\ee
Here we do not need to worry anymore if $X_g^\mu$ define an actual coordinate system or not, but we have to make sure that the support of $\vr{f}$ is contained in the interior of the image of $M$ inside $\M$ under the quadruple of maps  $X_g^\mu$, for all $g$ of interest. To ensure that, we restrict ourselves to a sufficiently small neighborhood $\Ocal\subset\E(\Mcal)$ of the reference metric $g_0$. This restriction is not going to be relevant later on, as quantisation is done perturbatively anyway.

Let $\Fcal(\Mcal)$ denote the algebra generated by functionals $\Phi^\beta_{\vr{f}}$ where $\vr{f}$ has compact support contained in the interior of $\bigcap_{g\in\Ocal} X_g(M)$. Note that elements of this space are no longer compactly supported in the sense of definition \eqref{support}, since the support of the functional derivative  $(\Phi^\beta_{\vr{f}})^{(1)}(g)$ can be different for different points $g\in\Ocal$, even though each  $(\Phi^\beta_{\vr{f}})^{(1)}(g)$ is a compactly supported distribution. They are also not local, because $X_g^*\vr{f}$ can depend on arbitrary high derivatives of the metric $g$. An advantage of using $\Fcal(\Mcal)$ is that the transformation law under diffeomorphisms takes a simpler form, namely
\[
\rho_1(\xi)\Phi^\beta_{\vr f}=(\rho(\xi)\Phi)^\beta_{\vr f}
\]
where $\rho=\rho_1+\rho_2$, as defined in \eqref{sym1} and \eqref{sym2}. To see this, note that
\begin{align*}
(\rho_1(\xi)(\Phi^\beta_{\vr{f}}))(g)&=\left<\frac{\delta \Phi^\beta_{\vr f}(g)}{\delta g}\Big|_{X},\pounds_cg\right>+\left<\frac{\delta \Phi^\beta_{\vr{f}}(g)}{\delta X^\mu}\Big|_{g},\pounds_c X_g^\mu\right>=\\
&=(\rho(\xi)\Phi_{\vr{f}})(X_g,g)=(\rho(\xi)\Phi)^\beta_{\vr f}
\end{align*}
This becomes particularly relevant for the construction of the BV differential $s$, which we will perform in the next section. In particular, as $\rho_2$ is not a dynamical symmetry, it cannot be implemented consistently within the BV formalism by means of the antibracket. From this reason, it is better to work on $\Fcal(\Mcal)$, where only $\rho_1$ is necessary.

The downside is the non-locality which we introduced by introducing the field dependent coordinates. This, however, is well under control, since the new complex is isomorphic to the old one. Besides, a non-local dependence on field configurations is necessary to obtain meaningful diffeomorphism invariant quantities, as we know that there are no local diffeomorphism invariant observables in general relativity.
\subsection{An abstract point of view on field dependent coordinates}\label{abstract:view}
More generally, there is no reason to distinguish between the curvature invariants that enter the definition of $X_g$'s and those which constitute the density $\Phi_x$ in  $\Phi^\beta_{\vr{f}}(g)=\int_M\Phi_x(g)\vr{f}(X_g(x))$. Abstractly speaking, one can consider a family of $N$ scalar  curvature invariants $R_1,\ldots, R_N$ and a class of globally hyperbolic spacetimes characterized by the 4-dimensional images under this $N$-tuple of maps. It was shown in \cite{MuSa} that any globally hyperbolic spacetime with a time function $\tau$ such that $|\nabla\tau|\ge1$, can be isometrically embedded into the $N$-dimensional Minkowski spacetime $\M^N$ for a sufficiently large $N$ (fixed by the spacetime dimension). This suggests that, depending on the physical situation, one can always choose $N$ and construct  $R_1,\ldots, R_N$  in such a way that all spacetimes of interest are characterized uniquely in this framework. One can then consider observables of the form
\[
\int_{M}\vr{f}(R_1(x),\ldots, R_N(x))\,,
\]
where $\vr{f}:\M^N\rightarrow\Omega^4(M)$ is a density-valued function, which we assume to be compactly supported inside the image of $M$ under the embedding $\ph:M\rightarrow \M^N$ defined by the family $R_1,\ldots,R_N$. One could then quantize the metric perturbation, in the same way as we do it in the present work. An alternative approach would be to quantize the embedding $\ph$ itself, as it was done for the bosonic string quantization in \cite{BRZ}. We hope to explore these possibilities in our future works.
\subsection{BV complex}\label{BVcomplex}
In this section and in the following ones we fix the spacetime $\Mcal$ and the map $\beta$, so we can simplify the notation and write $\Phi_{\vr{f}}$ istead of $\Phi^\beta_{\Mcal\vr{f}}$ if no confusion arises. In the first step we construct the Chevalley-Eilenberg complex corresponding to the action $\rho$ of $\X_c(\Mcal)$ on $\Fcal(\Mcal)$.  The Chevalley-Eilenberg differential is constructed by replacing components of the infinitesimal diffeomorphism in \eqref{diffeo0} by ghosts, i.e. evaluation functionals on $\X_c(\Mcal)$ defined by $c^\mu(x)(\xi)\doteq \xi^\mu(x)$. 
$\CEcal(\Mcal)$, the underlying algebra of the Chevalley-Eilenberg complex, is the graded subalgebra of $\Ci(\E(\Mcal),\La\X'(\Mcal))$, generated by elements of the form $\Phi_{\vr{f}}$, where $\Phi\in\Nat(\Tens_c,\CE)$ and $\CE(\Mcal)\doteq\Ci_{\ml}(\E(\Mcal),\Lambda\X'(\Mcal))$.
The Chevalley-Eilenberg differential $\gamma_{\sst CE}$ is defined by
\begin{align}
\gamma_{\sst CE}:\ &\CEcal^q(\Mcal)\rightarrow\CEcal^{q+1}(\Mcal)\,,\nonumber\\
(\gamma_{\sst CE}\, \omega)(\xi_0,\ldots, \xi_q)&\doteq\sum\limits_{i=0}^q(-1)^{i+q}\left<\tfrac{\delta}{\delta g}\big|_{X}(\omega(\xi_0,\ldots,\hat{\xi}_i,\ldots,\xi_q)),\pounds_{{\xi_i}}g\right>+\nonumber\\
&+\sum\limits_{i<j}(-1)^{i+j+q}(\omega(-[\xi_i,\xi_j],\ldots,\hat{\xi}_i,\ldots,\hat{\xi}_j,\ldots,\xi_q)\,,
\end{align}
where $\xi_0,\ldots, \xi_q\in\X(\Mcal)$. 
To see that $\gamma_{\sst CE}$ maps $\CEcal(\Mcal)$ to itself, we have to use the fact that symmetries act locally, so  $\gamma_{\sst CE}$ maps local functionals into local functionals and can be also lifted to a map on natural transformations and hence is also well defined on $\CEcal(\Mcal)$. By construction $\gamma_{\sst CE}$ is nilpotent and, comparing with \eqref{diffeo0}, we see that the 0-th cohomology of $\gamma_{\sst CE}$ is the space of diffeomorphism invariant elements of $\Fcal(\Mcal)$. 
Now we construct the Batalin-Vilkovisky complex, following the ideas of \cite{FR}. Note that $\CE(\Mcal)$ can be formally seen as the space of multilocal, compactly supported functions on a graded manifold $\overline{\E}(\Mcal)=\E(\Mcal)[0]\oplus\X(\Mcal)[1]$. The underlying graded algebra of the BV complex, is formally $\Ci_{\ml}(\Pi T^*\overline{\E}(\Mcal))$ the graded algebra of multilocal functions on the odd cotangent bundle\footnote{By $\Pi T^*\overline{\E}(\Mcal)$ we mean the graded manifold $\E(\Mcal)[0]\oplus\X(\Mcal)[1]\oplus \E_c'(\Mcal)[-1]\oplus\X'_c(\Mcal)[-2]$. The fact that the fiber consists of duals of spaces of compactly supported sections is consistent with our choice of the manifold structure on  $\E(\Mcal)[0]\oplus\frakg(\Mcal)[1]$, which is  induced by the topology $\tau_W$ introduced in section \ref{conf:space}.} of $\overline{\E}(\Mcal)$.
 We define $\BVcal(\Mcal)$ to be its graded subalgebra generated by covariant fields which arise as $\Phi_{\vr f}$ for $\Phi\in\Nat(\Tens_c,\BV)$ with
\be
\BV\doteq\Ci_{\ml}\big(\E,\La\E_c\widehat{\otimes}\La\C_c\widehat{\otimes}\La{\frakg}'\widehat{\otimes}S^\bullet \frakg_c\big)\,.\label{BVfix}
\ee
 The sequential completion  $\widehat{\otimes}$ of the algebraic tensor product is explained in details in \cite{FR} . We denote a field multiplet in $\overline{\E}(\Mcal)$ by $\ph$ and its components by $\ph^\al$, where the index $\al$ runs through all the metric and ghost indices. ``Monomial'' elements\footnote{The name \textit{monomial}, used after \cite{DF}, highlights the fact that these functions are homogeneous functions of field configurations.} of $\BV(\Mcal)$  can be written formally as
\be
\label{Polynom}
F=\int f_F(x_1, \dots ,x_{m})\Phi_{x_1}\!\dots\Phi_{x_k}  \tfrac{\delta^r}{\delta \ph(x_{k+1})} \dots   \tfrac{\delta^r}{\delta \ph(x_{m})}\,,
\ee
where $\Phi_{x_i}$ are evaluation functionals, the product denoted by the juxtaposition is the graded symmetric product of $\BV(\Mcal)$, $\tfrac{\delta^r}{\delta \ph(x_{i})}$ are right derivatives and we keep the summation over  the indices $\alpha$ implicit. Polynomials are sums of elements of the form \eqref{Polynom}, where $f_F$ is a distributional density with compact support contained in the product of partial diagonals. The WF set of $f_F$
has to be chosen in  such a way, that $F$ is multilocal. In the appropriate topology (more details may be found in \cite{FR}) polynomials \eqref{Polynom}  are dense in  $\BV(\Mcal)$. We identify the right functional derivatives $\tfrac{\delta^r}{\delta \ph^\al(x)}$, which differ from the left derivatives by the appropriate sign, with the so called \textit{antifields}, $\Phi_\al^\dgr(x)$\footnote{The choice of right derivatives at this point is just a convention and we use it in this work to simplify the signs.}. Functional derivatives with respect to odd variables and antifields are defined on polynomials as left derivatives and are extended to $\BV(\Mcal)$ by continuity.  In what follows, $\tfrac{\delta}{\delta \ph^\al(x)}$, $\tfrac{\delta}{\delta \ph_\al^\dgr(x)}$ denotes left derivatives. 

$\BVcal(\Mcal)$ is a graded algebra with two gradings:  the pure ghost number $\#\pg$ and the antifield number $\#\af$. Functionals on $\overline{\E}(\Mcal)$ have $\#\pg=0$, $\#\af=0$; ghosts have $\#\pg=1$ and  $\#\af=0$. Vector fields on $\overline{\E}(\Mcal)$ have the antifield number assigned according to the rule $\#\af(\Phi_\al^\ddagger(x))=\#\pg(\Phi^\al(x))+1$. We define the total grading of 
$\BVcal(\Mcal)$, the so called total ghost number by setting $\#\gh=\#\pg-\#\af$. 

Since $\BVcal(\Mcal)$ is the subalgebra of the algebra of functions on the odd cotangent bundle $\Pi T^*\overline{\E}(\Mcal)$, its elements are graded multivector fields and $\BVcal(\Mcal)$ carries a natural graded bracket $\{.,.\}$ (called the antibracket), which is defined as minus the usual Schouten bracket, i.e.
\[
\{F,G\}=\left<\frac{\delta^r F}{\delta \ph^\al},\frac{\delta^l G}{\delta \ph^\ddagger_\al}\right>-\left<\frac{\delta^r F}{\delta \ph^\ddagger_\al},\frac{\delta^l G}{\delta \ph^\al}\right>\,.
\]
Let us now discuss the field equations. Taking $\left<\frac{\delta}{\delta g}L^{\sst EH}_{\vr{f}}(g),h\right>$ and   choosing $\vr{f}$ such that $\vr{f}(X_g)\equiv 1$ on the support of $h$, we arrive at Einstein's equation in the vacuum:
\be\label{eom0}
R_{\mu\nu}[g]=0\,.
\ee
Let $\E_S(\Mcal)$ be the space of solutions to \eqref{eom0}. We are interested in characterizing the space of covariant fields on  $\E_S(\Mcal)$, which can be characterized as the quotient $\Fcal_S(\Mcal)=\Fcal(\Mcal)/\Fcal_0(\Mcal)$, where  $\Fcal_0(\Mcal)\subset\Fcal(\Mcal)$ is the ideal of $\Fcal(\Mcal)$ generated by the equations of motion, i.e. it is the image of the Koszul operator $\delta_{\sst EH}$ defined by
\be\label{Koszul}
\delta_{\sst EH} \Phi_{\vr{f}'}=\{ \Phi_{\vr{f}'},L^{\sst EH}_{\vr{f}}\},\ \Phi_{\vr{f}'}\in \BVcal(\Mcal),\,\vr{f}\equiv 1\ \textrm{on }\supp\, \vr{f}'\,,
\ee
To simplify the notation, we write from now on $\delta_{\sst EH}  \Phi_{\vr{f}'}=\{ \Phi_{\vr{f}'},S^{\sst EH}\}$ instead of  (\ref{Koszul}). In a similar manner, one can find a natural transformation $\theta^{\sst CE}$, that implements $\gamma_{\sst CE}^*$, i.e.  $\gamma_{\sst CE}^* =\{\ \cdot\ ,\theta^{\sst CE}\}$. For future convenience, we choose $\theta^{\sst CE}$ as
\be\label{choice:theta}
\theta^{\sst CE}_{\vr{f}}(g,c)=\left<\frac{\delta}{\delta g},\pounds_{fc}g\right>+\left<\frac{\delta^r}{\delta c},c^\mu\partial_\mu(fc)\right>\,,
\ee
where $f=X_g^*\vr{f}$. The motivation for the above form of $\theta^{\sst CE}_{\sst \Mcal}(f)$ is to introduce the cutoff for the gauge transformation by multiplying the gauge parameters with a compactly supported function $f$.
The total BV differential is the sum of the Koszul-Tate and the Chevalley-Eilenberg differentials:
 \[
 s_{\sst BV}\doteq\{\ \cdot\ ,S^{\sst EH}+\theta^{\sst CE}\}\,.
 \]
The nilpotency of  $s_{\sst BV}$ is guaranteed by the so called classical master equation (CME). In \cite{FR} it was formulated as a condition on the level of natural transformations. Here we can impose a stronger condition, with an appropriate choice of test functions. Let $\vr{f}\doteq(\vr{f}_1,\vr{f}_2)$ be a tuple of test functions chosen in such a way that $\vr{f}_i(X_g)$, $i=1,2$ is compactly supported 
for all $g\in\Ocal\subset\E(\Mcal)$ for an appropriately chosen small neighborhood $\Ocal$ of $g_0$. A pair of Lagrangians $(L^{\sst EH},\theta^{\sst CE})$, acts on the test functions according to
\be\label{pairing}
L^{\ex}_{\vr{f}}\doteq L^{\sst EH}_{\vr{f}_1}+\theta^{\sst CE}_{\vr{f}_2}\,,
\ee
For simplicity we will write just $L^{\sst EH}$ instead of $(L^{\sst EH},0)$, so $L^{\sst EH}_{\vr{f}}\equiv L^{\sst EH}_{\vr{f}_1}$, similarly for the other terms.

The choice of different test functions is motivated by the fact that they have slightly different meaning in our formalism and a different physical interpretation. The test function $\vr{f}_1$ is the cutoff for the Einstein-Hilbert interaction Lagrangian and $\vr{f}_2$ is used to multiply the gauge parameters in order to make the gauge transformations compactly supported. From this perspective, it is natural to require that $\vr{f}_1\equiv 1$ on the support of $\vr{f}_2$. This way, the gauge transformations doesn't see the cutoff of the theory. 

With an appropriate choice of a natural Lagrangian $\theta^{\sst CE}$ which generates $\gamma_{\sst CE}$ (as for example the one made in \eqref{choice:theta}), a stronger version of the {\cme} is fulfilled, namely
\be\label{CME}
\tfrac{1}{2}\{L^{\sst EH}_{\vr f}+\theta^{\sst CE}_{\vr f},L^{\sst EH}_{\vr f}+\theta^{\sst CE}_{\vr f}\}=0\,,
\ee
for any compactly supported $\vr{f}$, constructed as above.

Now, the fact the $\delta_{\sst EH}$ (graded-)commutes with $\gamma_{\sst CE}$ is the consequence of the invariance of the field equations under infinitesimal diffeomorphism. As $\delta_{\sst EH}^2=0=\gamma_{\sst CE}^2$, we conclude that $s_{\sst BV}^2=0$. A crucial feature of the BV formalism is the fact that the cohomology of the total differential can be expressed with the cohomology of $\gamma_{\sst CE}$ and the homology $\delta_{\sst EH}$. For this to hold $(\BVcal(\Mcal),\delta_{\sst EH})$ has to be a resolution (i.e. the $H_k$'s are trivial for $k<0$). To see this, we can look at the first row of the BV bicomplex with $\#\pg=0$. We have
 \[
\ldots\rightarrow\La^2\Vcal\oplus\Gcal\xrightarrow{\delta_{\sst EH}\oplus\rho}\Vcal\xrightarrow{\delta_{\sst EH}}\Fcal\rightarrow 0\,,
\]
where $\Vcal(\Mcal)$ is the subalgebra of $\BVcal(\Mcal)$ consisting of vector fields on $\E(\Mcal)$ and $\Gcal(\Mcal)$ is generated by elements of the form $\Phi_{\vr f}$ for $\Phi\in\Nat(\Tens_c,\fG)$, where $\fG(\Mcal)\doteq\Ci_\ml(\E(\Mcal),\X_c(\Mcal))$. Here $\rho$ is the map defined in \eqref{var}, so its image exhausts the kernel of $\delta_{\sst EH}$ and the sequence is exact in degree 1. This reasoning extends also to higher degrees, so one shows that the complex above is a resolution. The same argument can be repeated for all the rows of the BV bicomplex. Using standard methods of homological algebra, we can now conclude that the 0-th cohomology of $s_{\sst BV}$ on $\BVcal(\Mcal)$ is 
given by
\[
H^0(\BVcal(\Mcal), s_{\sst BV})=H^0((\BVcal(\Mcal),\delta_{\sst EH}), \gamma_{\sst CE})\,,
\]
and can be interpreted as $\BVcal^{\,ph}(\Mcal)$, the space of gauge invariant on-shell observables. 

In the next step we introduce the gauge fixing along the lines of \cite{FR}. For the specific choice of gauge we need, we have to extend the BV complex by adding auxiliary scalar fields: 4 scalar antighosts $\bar{c}_\mu$ in degree $-1$ and 4 scalar Nakanishi-Lautrup fields $b_\mu$, $\mu=0,...,3$ in degree $0$. The new extended configuration space is again denoted by $\overline{\E}(\Mcal)$ and the extended space of covariant fields on the new configuration space by $\BVcal(\Mcal)$. We define
\begin{align*}
s(\overline{c}_\mu)&=ib_\mu-\pounds_c\overline{c}_\mu\,,\\
s(b_\mu)&=\pounds_c b_\mu\,.
\end{align*}
To implement these new transformation laws we need to add to the Lagrangian a term
\[
\left<\frac{\delta^r}{\delta \overline{c}_\mu},if_2b_\mu-\pounds_{f_2c} \overline{c}_\mu,\right>+\left< \frac{\delta^r}{\delta b_\mu},\pounds_{f_2c} b_\mu,\right>\,,
\]
where $f_2=\vr{f}_2\circ X_g$

Next, we perform an automorphism $\al_\Psi$ of $(\BVcal(\Mcal),\{.,.\})$ such that the part of the transformed action which doesn't contain antifields has a well posed Cauchy problem. We define
\be\label{gfermion}
\alpha_\Psi(F)\doteq\sum_{n=0}^{\infty}\frac{1}{n!}\underbrace{\{\Psi_{\vr{f}'},\dots,\{\Psi_{\vr{f}'}}_n,F\}\dots\}\,,
\ee
where $X_g^*\vr{f}'\equiv 1$ on $\supp\,F$ and 
 \be\label{gff}
 \Psi_{\vr{f}'}=i\sum_{\mu,\nu}\int\!\!((\partial_\mu\bar{\vr{c}}_\nu\vr{g}^{\mu\nu}-\tfrac{1}{2}\vr{b}_\mu \bar{\vr{c}}_\nu\vr{\kappa}^{\mu\nu})\vr{f}')(X_g(x))d\mu_g(x)\,,
 \ee
 where $\vr{\kappa}$ is a non-degenerate 2-form on $\RR^4$. The explicit appearance of this form in the gauge fixing Fermion is related to the choice of a dual pairing for Nakanishi-Lautrup fields. This pairing is also used to define the embedding of $\overline{\E}_c$ into  $\overline{\E}'$. We will see in the next section that, as long as one uses consistently the same pairing, all essential structures are independent of this choice. 
 \begin{multline*}
\{\Psi_{\vr{f}'},L^{\ex}_{\vr{f}}\}=-\int\!\!(\partial_\mu(\vr{f}_2\vr{b}_\nu)\vr{g}^{\mu\nu}-\tfrac{1}{2}\vr{f}_2\vr{b}_\mu \vr{b}_\nu\vr{\kappa}^{\mu\nu})\sqrt{-\det\vr{g}})(X_g)d^4X+\\
+i\int (\partial_\mu \overline{\vr{c}}_\nu\sqrt{-\det\vr{g}}\vr{g}^{\mu\al}\partial_\al (\vr{f}_2\vr{c}^\nu))(X_g(x))d^4X\,,
\end{multline*}
which can be rewritten as
\[
\int\!\!\left(-\partial_\mu(\vr{f}_2\vr{b}_\nu)\vr{g}^{\mu\nu}\right)(X_g)d\mu_g+\int\!\!\left(\tfrac{1}{2}\vr{f}_2\vr{b}_\mu \vr{b}_\nu\right)(X_g)\kappa^{\mu\nu}d\mu_g+i\int\!\! f_2\Box_{\tilde{g}}\bar{c}_\nu C^\nu d\mu_{g}\,,
\]
where $C^{\mu}\doteq\pounds_c X_gg^{\mu}$, and $\kappa^{\mu\nu}$ is now a non-degenrate 2-form on $M$. In the coordinate system defined by $X$ we have $C^{\mu}=\vr{c}^{\mu}\circ X_g \equiv c^{\mu}$, so the scalar fields  $C^{\mu}$ coincide with the components of the ghost field $c\in\X(\Mcal)$. We denote the first term in the above formula by $L^{\sst GF}_{\vr{f}_2}$ and the second by $L^{\sst FP}_{\vr{f}_2}$ (gauge-fixing and Fadeev-Popov terms, respectively). The full transformed Lagrangian is given by:
\be\label{extended}
L^\ex_{\vr{f}}=L^{\sst EH}_{\vr{f}_1}+L^{\sst GF}_{\vr{f}_2}+L^{\sst FP}_{\vr{f_2}}+L^{\sst AF}_{\vr{f}_2}\,,
\ee
where $L^{\sst AF}_{\vr{f}_2}$ is the term containing antifields. The re-defined $L^\ex_{\vr{f}}$ also satisfies \eqref{CME}.

The variables of the theory (i.e. the components $\ph^\al$ of the multiplet $\ph\in\overline{\E}(\Mcal)$) are now: the metric $g\in\E(\Mcal)$, the Nakanishi-Lautrup fields $b_\mu$ and the antighosts $\bar{c}_\mu$, $\mu=0,\dots,3$ (scalar fields), ghosts $c\in \X(\Mcal)$. 
We introduce a new grading, called the total antifields number $\#\ta$. It is equal to 0 for functions on $\overline{\E}(\Mcal)$ and equal to 1 for all the vector fields on $\overline{\E}(\Mcal)$. 
New field equations are now equations for the full multiplet $\ph=(g,b_\mu,c,\bar{c}_\mu)$, $\mu=0,\dots,3$ and are derived from the $\#\ta=0$ term of $L^\ex$, denoted by $L$. The corresponding action $S(L)$ is called \textit{the gauge fixed action}. The $\al_\Psi$-transformed BV differential $s=\al_\Psi\circ s_{\sst BV}\circ \al_\Psi^{-1}$ is given by:
\[
s=\{\cdot,S^\ex\}=\gamma+\delta\,.
\]
The differential $\delta$ is the Koszul operator for the field equations derived from $S$ and
 $\gamma$ is the gauge-fixed BRST operator $\gamma$. The action of $\gamma$ on $\Fcal(\Mcal)$ and the evaluation functionals $b_
\mu$, $c$, $\bar{c}_\mu$ is summarized in the table below:
\begin{center}
{\setlength{\extrarowheight}{2.5pt}
\begin{tabular}{ll}
\toprule%
& $\gamma$\\\otoprule%
$\Phi_{\vr{f}}\in\Fcal$&$\left<\frac{\delta \Phi_{\vr{f}}}{\delta g},\pounds_{c}g\right>$\\
 $c$&$-\frac{1}{2}[c,c]$\\
 $b_\mu$& $\pounds_cb_\mu$\\
  $\bar{c}_\mu$& $ib-\pounds_c \overline{c}_\mu$\\\bottomrule
\end{tabular}}
\end{center}
The equations of motion expressed in the $X_g$ coordinate system are:
\begin{align}
R_{\lambda\nu}[\vr{g}]&=-i\partial_{\la}\overline{\vr{c}}_\al\,\partial_{\nu}\vr{c}^\al-\partial_{(\la}\vr{b}_{\nu)}\label{e1}\\%
\Box_{\vr{g}}\vr{c}^\mu&=0\label{e2}\\%
\Box_{\vr{g}}\overline{\vr{c}}_\mu&=0\label{e3}\\%
\tfrac{1}{\sqrt{-\det\vr{g}}}\partial_\mu(\sqrt{-\det\vr{g}}\vr{g}^{\mu\nu})(X_g)&=\vr{b}_\mu(X_g)\kappa^{\mu\nu}\label{e4}
\end{align}
where $\vr{g}$, $\vr{b}_\mu$, $\vr{c}^\mu$, $\overline{\vr{c}}_\mu$ have to be understood as evaluation functionals and not as field configurations. The last equation implies that
\be\label{substitution}
\Box_g X_g^\nu=b^\nu\,,
\ee
where $b^\nu\doteq (\vr{b}_\mu\kappa^{\mu\nu})\circ X_g$. The equation for $b^\mu$ is obtained by using the Bianchi identity satisfied by $R_{\lambda\nu}[\vr{g}]$ in equation \eqref{e1} and takes the form
\be\label{e5}
\Box_{\vr{g}}\vr{b}_\mu=0\,.
\ee
The gauge condition \eqref{e4} is the generalized harmonic gauge, studied in detail in \cite{FriedRen} (see also \cite{Fried} for a review). With this choice of a gauge the initial value problem for the multiplet $(g,b_\mu,c,\overline{c}_\mu)$ is well posed and the linearized equations become hyperbolic. It turns out that for $\Mcal=(M,g_0)$, the choice $\kappa^{\mu\nu}=g_0^{\mu\nu}$  is particularly convenient, so from now on we will continue with this choice.
Since $s=\delta+\gamma$ and $(\BVcal(\Mcal),\delta)$ is a resolution, the space of gauge invariant on-shell fields is recovered as the cohomology  $\Fcal^{\,\inv}_S(\Mcal)=H^0(s,\BVcal(\Mcal))=H^0(\gamma,H_0(\delta,\BVcal(\Mcal)))$.
\subsection{Peierls bracket}
We are finally ready to define the Peierls bracket. The system of equations considered in the previous section can be linearized by computing the second derivative of $L_{\vr{f}}$ and defining the Euler-Lagrange derivative $S''_\Mcal$
as a map from the extended configuration space to the space of vector-valued distrubutions (details about the functional analytic aspects of this construction can be found in \cite{Rej}) given by
\[
\left<(S''_\Mcal)_{\beta\al},\psi^\al_1\otimes \psi^\beta_2\right>\doteq \left<\frac{\delta^l}{\delta\ph^\beta}\frac{\delta^r}{\delta\ph^\alpha}L_{\vr{f}},\psi^\al_1\otimes \psi^\beta_2\right>\,,
\]
where $\psi_1\in\overline{\E}(\Mcal)$, $\psi_2\in\overline{\E}_c'(\Mcal)$ are field configuration multiplets and $X^*\vr{f}\equiv 1$ on the support of $\psi_2$. To simplify the sign convention, we use both the right and the left derivative.  For $\kappa=\vr{g}_0$, an explicit construction shows that the retarded and advanced propagators exist. We give formulas for these propagators in the next section, for the case of linearization around a particular background. Let $\Delta^{R/A}_{g}$ denote the propagators obtained by linearizing around the metric $g$. We define a Poisson (Peierls) bracket on  $\BVcal(\Mcal)$ by:
\[
\Pei{A}{B}(g,b_\mu,c,\overline{c}_\mu)\doteq\sum_{\al,\beta}\skal{\frac{\delta^l A}{\delta\ph^\al}}{\De_{g}^{\al\beta}\frac{\delta^r B}{\delta\ph^\beta}}(g,b_\mu,c,\overline{c}_\mu),\qquad \Delta_{g}=\Delta^A_{g}-\Delta^R_{g}\,.
\]
Note that
the support of $\Pei{A}{B}_{g}$ is contained in the support of $\Pei{A}{B}_{g_0}$, where $g_0$ is the reference metric in $\Mcal=(M,g_0)$. Hence, $\Pei{.}{.}$ is a well defined operation on $\BVcal(\Mcal)$, taking values in the space of smooth functionals on $\E(\Mcal)$. However, $\BVcal(\Mcal)$ is closed under $\Pei{.}{.}$. In order to obtain a Poisson algebra, one needs a suitable completion $\overline{\BVcal}(\Mcal)$, which we define in Appendix \ref{class}. Now we want to see if $\Pei{.}{.}$ is compatible with $s$. First, note that the image of $\delta$ is a Poisson ideal, so $\Pei{.}{.}_{g}$ is well defined on $H^0(\delta, \overline{\BVcal}(\Mcal))$. It remains to show that, on $H^0(\delta, \overline{\BVcal}(\Mcal))$, $\gamma$ is a derivation with respect to $\Pei{.}{.}_{g}$.  To prove it, we have to show that
\[
m\circ(\gamma\otimes 1+1\otimes\gamma)\circ\Ga'_{\De_{g}}=m\circ\Ga'_{\De_{g}}\circ(\gamma\otimes 1+1\otimes\gamma)\,,
\]
where
\[
\Ga'_{\De_{g}} \doteq \sum_{\al,\beta}\left<{\De_{g}}^{\al\beta}, \frac{\delta^l}{\delta\ph^\al}\otimes\frac{\delta^r}{\ph^\beta}\right>\,,
\]
After a short calculation, we obtain the following condition (compare with Prop. 2.3. of \cite{Rej13}):
\be\label{derivation}
(-1)^{|\sigma|}{K_{g}}^{\sigma}_{\ \beta}(x)\Delta_{g}^{\beta\al}(x,y)+{K_{g}}^{\al}_{\ \beta}(y)\Delta_{g}^{\sigma\beta}(x,y)=\gamma(\Delta_g^{\sigma\al})\,,
\ee
where $|\sigma|$ denotes $\#\gh(\ph^\sigma)$, while $K_{g}$ is defined by
\[
\gamma_{0g}\Phi_x^\al=\sum_\sigma{K_{g}}^{\al}_{\ \sigma}(x)\Phi_x^\sigma\equiv (K_{g}\Phi)^\al\,,
\]
and $\gamma_{0g}$ is the linearization of $\gamma$ around $g$. 
In a more compact notation we can write this condition as
\[
(-1)^{|\sigma|}({K_{g}}\circ\Delta_{g})^{\sigma\al}+(\Delta_{g}\circ K^\dagger_g)^{\sigma\al}=\gamma(\Delta_g^{\sigma\al})\,,
\]
where $ K^\dagger_g$ means taking the transpose of the operator-valued matrix and adjoints of its entries.

In \cite{Rej13} it was shown that this condition 
holds when $K$ is linear and the causal propagator doesn't depend on the fields. Here we give the proof of the general case. The gauge invariance of the action in the stronger form used in \eqref{CME} implies that
\[
\left<\frac{\delta^l L_{\vr{f}'}}{\delta \ph^\al},\theta^\al_{\vr{f}}\right>=0\,,
\]
where $\theta^\al_{\vr{f}}$ is the term in $\theta_{\vr{f}}$ which multiplies $\Phi_\al^\ddagger$. We can now apply on the both sides the differential operator $\left<(\De_g^R)^{\mu\beta}\circ\frac{\delta^l}{\delta\ph^{\beta}}\frac{\delta^r}{\delta\ph^{\kappa}},(\De_g^R)^{\kappa\nu}\right>$ and obtain
\begin{multline*}
\left<(\De_g^R)^{\mu\beta}\circ\left<\frac{\delta^l}{\delta\ph^{\beta}}\frac{\delta^l}{\delta\ph^\al}\frac{\delta^r}{\delta\ph^\kappa}L_{\vr{f}'},\theta_{\vr{f}}^\al\right>,(\De_g^R)^{\kappa\nu}\right>\\+\left<(\De_g^R)^{\mu\beta}\circ\left<\frac{\delta^l}{\delta\ph^{\beta}}\frac{\delta^l}{\delta\ph^\al}L_{\vr{f}'},\frac{\delta\theta_{\vr{f}}^\al}{\delta\ph^\kappa}\right>,(\De_g^R)^{\kappa\nu}\right>\\+\left<(\De_g^R)^{\mu\beta}\circ\left<\frac{\delta^l}{\delta\ph^\al}\frac{\delta^r}{\delta\ph^\kappa}L_{\vr{f}'},\frac{\delta\theta_{\vr{f}}^\al}{\delta\ph^\beta}\right>,(\De_g^R)^{\kappa\nu}\right>\\+\left<(\De_g^R)^{\mu\beta}\circ\left<\frac{\delta L_{\vr{f}'}}{\delta\ph^\al},\frac{\delta^r}{\delta\ph^\kappa}\frac{\delta^l}{\delta\ph^\beta}\theta_{\vr{f}}^\al\right>,(\De_g^R)^{\kappa\nu}\right>=0\ .
\end{multline*}
Setting $\vr{f}'\equiv1$ on the support of $\vr{f}$ we see that the last term is proportional to equations of motion, so we can ignore it. In the remaining terms we can make use of the fact that $\Delta^R_g$ is the Green's function for $S''_\Mcal$, so we finally obtain
\[
-\left<\frac{\delta \De_g^R}{\delta \ph^\al},\theta^\al_{\vr{f}}\right>+(-1)^{|\mu|}\frac{\delta\theta_{\vr{f}}^\mu}{\delta\ph^\kappa}\circ(\De_g^R)^{\kappa\nu}+(\De_g^R)^{\mu\beta}\circ\frac{\delta\theta_{\vr{f}}^\nu}{\delta\ph^\beta}\os0\,,
\]
where ``$\os$'' means ``modulo the terms that vanish on-shell'', i.e. modulo the image of $\delta$. The extra sign appears because we had to change one left derivative into a right derivative. The expression above is treated as an operator on $\overline{\E}_c(\Mcal)$ and if we choose $X^*\vr{f}\equiv1$ on the support of the argument, we arrive at
\[
\gamma(\De_g^R)\os(-1)^{|\sigma|}({K_{g}}\circ\Delta_{g})^{\sigma\al}+(\Delta_{g}\circ K^\dagger_g)^{\sigma\al}\,.
\]
The same argument can be applied to $\De_g^A$, so the identity \eqref{derivation} follows. We conclude that $\gamma$ is a derivation with respect to $\Pei{.}{.}_{g}$ modulo terms that vanish on the ideal generated by the full equations of motion, i.e. modulo the image of $\delta$. It follows that $\gamma$ is a derivation on $H^0(\delta, \overline{\BVcal}(\Mcal))$, hence $\Pei{.}{.}_{g}$ induces a Poisson bracket on $\overline{\Fcal}^{\,\inv}_S(\Mcal)\doteq H^0(s,\overline{\BVcal}(\Mcal))=H^0(\gamma,(H_0(\delta,\overline{\BVcal}(\Mcal)))$. This way we obtain a Poisson algebra $(\overline{\Fcal}^{\,\inv}_S(\Mcal),\Pei{.}{.}_{g})$, which we interpret as a classical algebra of observables in general relativity, for a particular choice of coordinates \eqref{betaX}.
\section{Quantization}\label{quantization}
\subsection{Outline of the approach}\label{outline}
In the previous section we defined the classical theory, now we want to quantize this structure. The usual prescription involving the star product cannot be applied to $\{.,.\}_{g}$, because acting iteratively with the functional differential operator $\left<{\Delta_{g}}^{\al\beta}, \frac{\delta^l}{\delta\ph^\al}\otimes\frac{\delta^r}{\ph^\beta}\right>$ involves also derivatives of $\Delta_{g}$. Therefore, from the point of view of quantization, it is convenient to split the gauge fixed action $S$ into a free part and the rest and quantize the free theory first. One can make this split by writing the Taylor expansion of $L_{\vr{f}}$ around a reference metric $g_0$, so $h=g-g_0$ is the perturbation. Later on, $h$ will be interpreted as a quantum fluctuation around a classical background. Interaction is introduced in the second step, with the use of time-ordered products.

 To keep track of the order in $h$, it is convenient to introduce a formal parameter $\la$ (identified with the square root of the gravitational coupling constant, i.e. $\la=\sqrt{\kappa}$) and the field multiplet $(g_0+\la h,\la b, \la c, \la \bar{c})$, together with corresponding antifields $(\la h^\dagger,\la b^\dagger, \la c^\dagger, \la \bar{c}^\dagger)$. We denote $(h,b, c, \bar{c})$ collectively by $\ph$. We also redefine the antifields using the prescription $\ph_\al^\ddagger\mapsto \la \ph_\al^\ddagger$. It is convenient to use the natural units, where $\kappa$ is not put to 1, but has a dimension of length squared, so $h$ has a dimension of $1/\textrm{length}$. The action used in quantization must be dimensionless, so, as in the path integral approach, we use $L/\la^2$, where $L$ is the full extended action defined before. We denote 
 \[
 L_0\doteq \la L^{(1)}_{\sst(M,g_0)}(g_0,0,0,0)+\frac{\la^2}{2}{L^{(2)}}_{\sst(M,g_0)}(g_0,0,0,0))\]
 and consider it to be the free action. If $g_0$ is not a solution to Einstein's equations, the linear term doesn't vanish and the free equation of motion becomes a differential equation with a source term. Also, negative powers of $\la$ appear in the action. Formally, we can solve this problem by introducing another parameter $\mu$, so that $\tfrac{1}{\la}L^{(1)}_{\sst(M,g_0)}(g_0,0,0,0)\equiv \mu J_{g_0}$, where $J_{g_0}$ is the source term, linear in $h$. Our observables will now be formal power series in both $\la$ and $\mu$. For the physical interpretation we will restrict ourselves to spacetimes where $g_0$ is a solution and put $\mu=0$, but algebraically we can perform our construction of quantum theory on arbitrary backgrounds. 
 
We introduce the notation $S_I=S^\ex-S_0$ and $\theta=S^\ex-S$. We also expand $\theta$ around $g_0$. The first nontrivial term in the expansion is linear in configuration fields and we denote it by  $\theta_0$. It generates the free gauge-fixed BRST differential $\gamma_0$. The Taylor expansion of the classical master equation \eqref{CME} yields in particular:
\[
\{\theta_0,S^{(2)}(0)\}+\{\theta_0,\theta_0\}+\{\theta_1,S^{(1)}(0)\}\sim0\,.
\]
The first two terms of this identity correspond to the classical master equation for the free Lagrangian $S^{(2)}(0)+\theta_0$. The third term vanishes only for on-shell backgrounds, so $\gamma_0$ is a symmetry of $S_0$ only if $g_0$ is a solution of  Einstein's equations. Consequences of this fact are discussed in detail in \cite{Rej13}.

Observables are formal power series in $\la$ obtained by expanding elements of $\BVcal(\Mcal)$ around $(g_0,0,0,0)$. From now on $\BVcal(\Mcal)$ is implicitly understood as the space of formal power series in $\la$ and $\mu$. As a simple example consider the scalar curvature $R$ on an on-shell background $(M,g_0)$.
\begin{multline*}
\Phi_{\vr{f}}(g)=\int_M R[g_0]\vr{f}(X_{g_0})d\mu_{g_0}\\
\qquad\qquad\quad+\la\left( \int_M \vr{f}(X_{g_0})\left.\frac{\delta}{\delta g}(Rd\mu)\right|_{g_0}(h)+\int_M R[g_0]\partial_\mu \vr{f}(X_{g_0}) \left.\frac{\delta X^\mu_g}{\delta g}\right|_{g_0}(h)\right)+\Ocal(\la^2)\,,
\end{multline*}
where $\vr{f}$ is a compactly supported function on $\RR^4$, with the support inside the interior of the image of $M$ under $X_g$. Note that we do not need to make any restrictions on $h$ now, as our construction is perturbative and the choice of $\vr{f}$ refers only to the background metric $g_0$. Therefore, from now on we will consider the configuration space to be $\E(\Mcal)=\Gamma((T^*M)^{\otimes 2})$.
 
 %


Let us now summarize the general strategy for the perturbative quantization of gravity, which we will follow in this work. We start with the full classical theory, described by the gauge-fixed action $S$ which is invariant under the BRST operator $\gamma$. Then, we linearize 
the action and the BRST differential around a fixed background metric $g_0$. This way, the ``gauge'' invariance of the theory is broken and the linearized classical theory doesn't posses the full symmetry anymore. If we linearize around $g_0$ which is a solution of the full Einstein's equations, then part of the symmetry remains and $S_0$ is invariant under $\gamma_0$. This, however, is not needed for performing a deformation quantization of the linearized theory along the lines of  \cite{FR3}, which works for arbitrary $(M,g_0)\in\obj(\Loc)$. The free theory, quantized this way, still contains non-physical fields and is not invariant under the full BRST symmetry. 
This is to be expected, since the linearization breaks this symmetry in an explicit way. To restore the symmetry we have to include the interaction. This can be done with the use of time-ordered products and relative S-matrices. The full interacting theory is again invariant under the full BRST symmetry $\gamma$. This is guaranteed by the so called \textit{quantum master equation} (QME), which is a renormalization condition for the time-ordered products (see \cite{FR3} for more details). A crucial step in our construction is to prove that the quantized interacting theory which we obtain in the end doesn't depend on the choice of the background $g_0$. This will be done in section \ref{background}.
\subsection{Perturbative formulation of the classical theory}\label{free:theory}
 The starting point for the construction of the linearized classical theory is the gauge-fixed free action $S_0$. For simplicity we choose from now on the gauge with $\kappa=\vr{g}_0$. To write $S_0$ in a more convenient way, we introduce some notation. 
Let us define the divergence operator, which acts on symmetric covariant tensors  $\di:\Gamma(S^2T^*M)\rightarrow \Gamma(T^*M)$ by
 \[
(\di\, t)_\al\doteq \frac{1}{\sqrt{-\det g_0}}g_0^{\beta\mu}\partial_\mu (t_{\beta\al}\sqrt{-\det g_0})\,.
\]
Let us also introduce a product 
\[\left<u,v\right>_{g_0}=\int_M\left<u^{\#},v\right> d\mu_{g_0}\,,
\]
where $u,v$ are tensors of the same rank and $\#$ is the isomorphism between $T^*M$ and $TM$ induced by $g_0$. 
 The formal adjoint of $\di$ with respect to the product $\left<.,.\right>_{g_0}$ is denoted by $\ds:\Gamma(T^*M)\rightarrow \Gamma(S^2T^*M)$. 
 In local coordinates (in our case fixed by the choice of $X_{g_0}^\mu$ ) we obtain:
\[
(\ds v)_{\al\bet}=\frac{1}{2}(\partial_\beta v_\al+\partial_\al v_\bet)\,.
\]
Another important operation is the trace reversal operator $G:(TM)^{\otimes2}\rightarrow (TM)^{\otimes2}$, defined by
\be\label{trace:rev}
Gt=t-\frac{1}{2}(\tr t)g_0\,.
\ee
We have $\tr(Gt)=-\tr t$ and $G^2=\id$. 
Using this notation we can write the quadratic part of the gauge fixed Lagrangian on a generic background $\Mcal=(M,g)\in\obj(\Loc)$ in the form:
\[
{L_0}_{\vr{f}}=\int\limits_M \left.\frac{\delta}{\delta g}(Rfd\mu)\right|_{g_0}(h)+2i\sum_{\nu=0}^3 \big<d{\bar{c}}_\nu,d(f {c}^\nu)\big>_{{g}_0}+\big<fb,\di(G{h})-\tfrac{1}{2} {b}\big>_{{g}_0}\,,
\]
where $\left.\frac{\delta}{\delta g}(Rd\mu)\right|_{g_0}(h)$ denotes the linearization of the Einstein-Hilbert Lagrangian density around the background $g_0$ and ${b}$ is a 1-form on $M$ defined by $b\doteq\sum_\nu\vr{b}_\nu(X_{g_0}) dX_{g_0}^\nu$. Now we calculate the variation of ${L_0}_{\vr{f}}$, to obtain $S_{\sst\Mcal}''(x,y)$. We write it here in a block matrix form:
\begin{equation}\label{EOMs}
S_{\sst\Mcal}''(z,x)=\delta(z,x)\left(\begin{array}{cccc}
-\frac{1}{2}\left(\Box_LG +2G\ds\circ\di\circ G\right)&G\circ \di^*&0&0\\
\di\circ G&-1&0&0\\
0&0&0&-i\Box_H\\
0&0&i\Box_H&0
\end{array}\right)(x)\,,
\end{equation}
where the variables are $(h,b,c^0,...,c^3,\overline{c}_0,...,\overline{c}_3)$. In the formula above $\Box_H=\delta d$ is the Hodge Laplacian, $\delta\doteq\hinv d*$ is the codifferential and $\Box_L$ is given in local coordinates by
\be
(\Box_L h)_{\al\bet}= \nabla^\mu\nabla_\mu h_{\al\bet}-2(R_{(\al}^{\phantom{(\al}\mu}h^{\phantom{\mu}}_{\bet)\mu}+R_{(\al\phantom{\mu\nu}\bet)}^{\phantom{(\al}\mu\nu\phantom{\bet)}}h_{\mu\nu})\label{LL1}\,.
\ee
In the literature, $\Box_L$ it is called the Lichnerowicz Laplacian \cite{Lich} and it provides a generalization of the Hodge Laplacian to the space of symmetric contravariant 2 tensors. Note that $\Box_L$ commutes with $G$ on $\E(\Mcal)$. 
It is now easy to check that the retarded and advanced propagators for $S_0$ are given by:
\[
\Delta^{A/R}(x,y)=-2\left(\begin{array}{cccc}
G\Delta_t^{A/R}&G\Delta_t^{A/R}G\circ \ds_y&0&0\\
\di_x\circ \Delta_t^{A/R}&\di_x\circ \Delta_t^{A/R}G\circ \ds_y+\frac{1}{2}\delta_4&0&0\\
0&0&0&-i\Delta^{A/R}_s\\
0&0&i\Delta^{A/R}_s&0
\end{array}\right)\,,
\]
where $\delta_4$ denotes the Dirac delta in 4 dimensions and subscripts in $\di_x$ and $\di^*_y$ mean that the operator should be applied to the first, respectively, to the second variable. In the above formula $\Delta_t^{A/R}$ are the advanced/retarded propagators for the operator $\Box_L$ acting on symmetric tensor fields with compact support $\E_c(\Mcal)=\Gamma_c(S^2T^*M)$. Analogously, $\Delta_s^{A/R}$ are the propagators for $\Box_H$ on 0-forms (scalar functions). Using the above formula we can write down the expression for the causal propagator and use this propagator to define the classical linearized theory, by introducing the Peierls bracket:
\begin{equation*}
\Pei{F}{G}_{g_0} = \sum_{\al,\beta} \skal{\frac{\delta^l F}{\delta\ph^\al}}{{\De}^{\al\beta}\frac{\delta^r G}{\delta\ph^\beta}}\,,
\end{equation*}
where $\Delta=\Delta^R-\Delta^A$. Let us define microcausal functionals as smooth, compactly supported functionals whose derivatives (with respect to both $\ph$ and $\ph^\dgr$) satisfy the WF set condition:
 \be\label{mlsc}
\WF(F^{(n)}(\ph,\ph^\dgr))\subset \Xi_n,\quad\forall n\in\NN,\ \forall\ph\in\overline{\E}(\Mcal)\,,
\ee
where $\Xi_n$ is an open cone defined as 
\be\label{cone}
\Xi_n\doteq T^*M^n\setminus\{(x_1,\dots,x_n;k_1,\dots,k_n)| (k_1,\dots,k_n)\in (\overline{V}_+^n \cup \overline{V}_-^n)_{(x_1,\dots,x_n)}\}\,,
\ee
where $\overline{V}_\pm$ is the closed future/past lightcone with respect to the metric $g_0$. Let $\BV_\mc(\Mcal)$ denote the space of microcausal functionals. It is equipped with the H\"ormander topology $\tau_{\Xi}$, which allows to control properties of functional derivatives (see \cite{FR} for a precise definition). We extend the space of covariant fields to ones induced by natural transformations in $\Phi\in\Nat(\Tens_c,\BV_{\mc})$ and the algebra generated by the corresponding functionals $\Phi_{\vr{f}}^\beta$ is denoted by $\BVcal_{\mc}(\Mcal)$.
%
\subsection{Free quantum theory}\label{free}
In the next step we want to construct the quantized algebra of free fields by means of deformation quantization of the classical algebra $(\BVcal_\mc(\Mcal),\Pei{.}{.}_{g_0})$. To this end, we equip the space of formal power series $\BVcal_{\mc}(\Mcal)[[\hbar]]$ with a noncommutative star product. In this construction one needs Hadamard parametrices, i.e. a set of distributions in $\Dcal'(M^2)$ which fulfill
\begin{IEEEeqnarray}{rCl}\label{parametrix}
  \omega^{\alpha \beta}(x,y) - (-1)^{|\ph^\alpha| |\ph^\beta|} \omega^{\beta \alpha}(y,x)& =& i \Pei{\ph^\alpha(x)}{\ph^\beta(y)}_{g_0},\IEEEyessubnumber\label{classical:limit}\\
 \sum\nolimits_\bet O^\alpha_\beta \omega^{\beta \gamma} & =& 0\ \textrm{mod }\Ci\textrm{ function},\IEEEyessubnumber\label{field:eq}\\
 \WF(\omega^{\alpha \beta}) & \subset &C_+,\IEEEyessubnumber\label{WFset}\\
 \overline{\omega^{\alpha \beta}(x,y)} & =& \omega^{\beta \alpha}(y,x).\IEEEyessubnumber\label{hermitian}
\end{IEEEeqnarray}
Here $O^\alpha_\beta$ are the coefficients of the differential operator induced by $S''_{\sst\Mcal}$, written in the basis $\{ \ph^\alpha \}$. They can be easily read off from  (\ref{EOMs}). 
By $ C_+$ we denoted the following subset of the cotangent bundle $ T^*M^2$:
\[
 C_+ = \{ (x_1, x_2; k_1, - k_2) \in T^*M^2 \setminus \{ 0 \} | (x_1; k_1) \sim (x_2; k_2), k_1 \in \overline{V}^+_{x_1} \},
\]
where $(x_1; k_1) \sim (x_2; k_2)$ if there is a lightlike geodesic from $x_1$ to $x_2$ and $k_2$ is a parallel transport of $k_1$ along this geodesics.
 These are the properties which we will require for a Hadamard parametrix on the general background $\Mcal\in\obj(\Loc)$. 
If we replace the condition (\ref{field:eq}) by a stronger one 
 \begin{equation}\label{field:eq:s}
\sum_\bet O^\alpha_\beta \omega^{\beta \gamma} =0\,,
 \end{equation}
 then the Hadamard parametrix becomes a Hadamard 2-point function.  
We will now show that such a distribution can be constructed on generic backgrounds. Assume that $\omega$ is of the form:
 \begin{equation}\label{two:point}
\omega=-2\left(\begin{array}{cccc}
G\omega_t&\omega_t^T\, \ds_y&0&0\\
\di_x\, \omega_t&\di_x\, G\, \omega^T\, \ds_y&0&0\\
0&0&0&-i\omega_v\\
0&0&i\omega_v&0
\end{array}\right)\,,
\end{equation}
In this case, the conditions for $\omega$ to be a Hadamard 2-point function reduce to:
\begin{IEEEeqnarray}{rCl}\label{parametrix2}
  \omega_{v/t}(x,y) -\omega_{v/t}(y,x)& =& i\Delta_{v/t}(x,y),\IEEEyessubnumber\label{classical:limit2}\\
\Box_L\, \omega_t& =& 0,\ \Box_H\, \omega_v=0,\IEEEyessubnumber\label{field:eq2}\\
 \WF(\omega_{v/t}) & \subset &C_+,\IEEEyessubnumber\label{WFset2}\\
 \overline{\omega_{v/t}(x,y)} & =& \omega_{v/t}(y,x).\IEEEyessubnumber\label{hermitian2}
\end{IEEEeqnarray}
The existence of a Hadamard parametrix is already clear, since one just needs to pick arbitrary parametrices $ \omega_t$, $ \omega_v$ of $\Box_L$ and $\Box_H$ respectively. Their existence was already proven in \cite{SahlVer} (the paper actually discusses general wave operators acting on vector-valued field configurations). Now, from a parametrix, one can construct a bisolution using a following argument: let $\omega$ be a Hadamard parametrix and by $O$ we denote the hyperbolic operator from \eqref{field:eq}, so $O_x \omega=h,\ O_y\omega=k,$ hold
for some smooth functions  $h$ and $k$. Let $\chi$ be a smooth function such that $\supp\chi$ is past-compact and $\supp(1-\chi)$ is future-compact. Define
$$G_{\chi}\doteq \Delta^R\chi+\Delta^A(1-\chi)\,.$$
Clearly $G_{\chi}$ is a right inverse for $O$. A Hadamard bisolution $\omega_\chi$ can be now obtained as
$$\omega_\chi\doteq(1-G_{\chi}O)\circ \omega\circ (1-OG^T_{\chi})\,.$$
With the use of Hadamard 2-point functions and parametrices one can define on $\BVcal_{\mc}(\Mcal)[\hbar]]$ a noncommutative star product. To separate the functional analytic aspects of the framework from the algebraic structure, it is convenient to introduce the space of regular functionals $\BV_\reg(\Mcal)$, which is defined as the space of smooth functionals satisfying $\WF(F^{(n)}(\ph,\ph^\dgr))=\varnothing$ for all $\ph$, $\ph^\ddagger$, so their derivatives are compactly supported smooth functions. Here, in contrast to our previous works, we do not assume that these functionals are compactly supported.

We can define on $\BV_\reg(\Mcal)$ the star product $\star$, which provides the deformation quantization of $(\BV_\reg(\Mcal),\Pei{.}{.}_{g_0})$  as:
\begin{equation*}
F\star G\doteq m\circ \exp({i\hbar \Gamma'_\Delta})(F\otimes G),
\end{equation*}
where  $\Gamma'_\Delta$ is the functional differential operator
\begin{equation*}
\Gamma'_\Delta\doteq  \sum_{\al, \beta} \left<{\Delta}^{\al\beta},\frac{\delta^l}{\delta\ph^\al} \otimes \frac{\delta^r}{\delta\ph^\beta}\right>\,.
\end{equation*}

There is however, a problem with extending this structure to $\BVcal_\mc(\Mcal)$, due to the singularity structure of the causal propagator. To solve this problem, we replace $\Delta$ by a Hadamard 2-point function $\omega=\frac{i}{2}\Delta+H$. The resulting star product is given by
\begin{equation*}
F\star_H G\doteq m\circ \exp({i\hbar \Gamma'_\omega})(F\otimes G)\,.
\end{equation*}

The two star products introduced above provide isomorphic structures on $\BV_\reg(\Mcal)[[\hbar]]$ and this isomorphism is given by the map $\al_H\doteq e^{\frac{\hbar}{2}\Gamma_H}: \BV_\reg(\Mcal)[[\hbar]]\rightarrow \BV_\reg(\Mcal)[[\hbar]]$, where
\[
\Ga_{H} \doteq \sum_{\al,\beta}\left<{H}^{\al\beta}, \frac{\delta^l}{\delta\ph^\al}\frac{\delta^r}{\ph^\beta}\right>\,.
\]
Now, the star product $\star_H$ can be extended to $\BVcal_\mc(\Mcal[[\hbar]])$ and the resulting algebra is denoted by $\fA_H(\Mcal)$. Note that $\BV_\reg(\Mcal)[[\hbar]]$ is dense in $\BVcal_\mc(\Mcal[[\hbar]])$, if we equip it with the H{\"o}rmander topology. 
We can, therefore, use the intertwining map $\al_H:\BV_\reg(\Mcal)[[\hbar]]\rightarrow \BV_\mc(\Mcal)[[\hbar]]$ to define a certain ``completion'' of the source space $\BV_\reg(\Mcal)$ by extending  $\BV_\reg(\Mcal)$ with all elements of the form $\lim_{n\rightarrow \infty}\al_H^{-1}(F_n)$, where $(F_n)$ is a convergent sequence in $\BVcal_\mc(\Mcal)$ with respect to the H\"ormander topology. The resulting space, denoted by $\al_H^{-1}(\BVcal_\mc(\Mcal))$, is equipped with a unique continuous star product equivalent to $\star_H$,
\[
\al_H^{-1}F\star \al_H^{-1}G\doteq \al_H^{-1}(F\star_H G)\,.
\]
Different choices of $H$ differ only by a smooth function, hence all the algebras  $(\al_H^{-1}(\BVcal_\mc(\Mcal)[[\hbar]]),\star)$ are isomorphic and define an abstract algebra $\fA(\Mcal)$. For $F \in \fA(\Mcal)$,  we have $\al_HF\in \fA_H(\Mcal)$, hence we can realize $\fA(\Mcal)$ more concretely as the space of families $\{ G_H \}_H$,  labeled by possible choices of $H$, fulfilling the relation
\[
 G_{H'} = \exp(\hbar \Gamma_{H'-H}) G_H\,,
\]
equipped with the product
\[
 (F \star G)_H = F_H \star_H G_H.
\]
The support of $F \in \fA(\Mcal)$ is defined as $\supp(F) = \supp(\al_HF)$. Again, this is indepedent of $H$. Functional derivatives are understood as
$
\skal{\frac{\delta F}{\delta \ph}}{\psi} = \al_H^{-1}\skal{\frac{\delta \al_HF}{\delta \ph}}{\psi}\,,
$
which is well defined as $\Gamma_{H'-H}$ commutes with functional derivatives.
 
Polynomial functionals in $\fA_H(\Mcal)$ are interpreted as Wick powers.
Corresponding elements of $\fA(\Mcal)$ are obtained by applying $\al_H^{-1}$. 
The resulting object is denoted by
\be\label{polynomials1}
\int :\Phi_{x_1}\dots\Phi_{x_n}:_H f(x_1,\dots, x_n)\doteq \al^{-1}_H\Big(\int \Phi_{x_1}\dots\Phi_{x_n} f(x_1,\dots, x_n)\Big)\, ,
\ee
where $f\in\Ecal'_{\Xi_n}(M^n,V)$ and we suppress all the indices. 
Let us now discuss the covariance properties of Wick powers. The assignment of $\fA(\Mcal)$ to a spacetime $\Mcal$ can be made into a functor $\fA$ from the category $\Loc$ of spacetimes to the category of  topological *-algebras $\Obs$ and, by composing with a forgetful functor, to the category $\Vect$ of topological vector spaces. Admissible embeddings are mapped to pullbacks, i.e. for $\chi:\Mcal\rightarrow \Mcal'$ we set $\fA\chi F(\ph)\doteq F(\chi^*\ph)$. Locally covariant quantum fields are natural transformations between $\D$ and $\fA$.  
We require Wick products to be locally covariant in the above sense. Let  $\BVcal_\loc(\Mcal)$ denote the subspace of $\BVcal_\mc(\Mcal)$ generated 
(as a vector space) by natural transformations $\Nat(\Tens_c,\F_{\loc})$. Note that elements are local in a weaker sense, as the coordinates in $\Phi_\Mcal(X_g^*\vr{f})$ depend on the metric (albeit locally).

Let us now define covariant Wick products. On each object $\Mcal$ we have to construct the map ${\TT_1}_\Mcal$ from $\BVcal_\loc(\Mcal)$ (the ``classical world'') to the quantum algebra $\fA(\Mcal)$ in such a way that
\be
\label{covariance}
{\TT_1}_{\Mcal}(\Phi^{\beta}_{\Mcal\vr{f}})(\chi^*g)={\TT_1}_{\Mcal'}(\Phi^{\beta}_{\Mcal'\vr{f}})(g)\,,
\ee
As we have noted before, classical functionals can be mapped  to $\fA_H(\Mcal)$ by identification \eqref{polynomials1}. This, however, doesn't have the right covariance properties and \eqref{covariance}  would not be fulfilled. A detailed discussion of the analogous problem in the scalar field theory is presented in the section 5 of \cite{BFV}, 
where it is shown that redefining Wick products to become covariant amounts to solving a certain cohomological problem. The result reproduces the solution, which was proposed earlier in \cite{HW}. One has to define $\TT_1$ as $\al^{-1}_{H-w}$, where $w$ is the smooth part of the Hadamard 2-point function $\omega=\frac{u}{\sigma}+v\ln\sigma+w$ with $\sigma(x,y)$ denoting the square of the length of the geodesic connecting $x$ and $y$ and with geometrically determined smooth functions $u$ and $v$. A more explicit construction of Wick products was provided in a recent review \cite{FR15}. In the present case the only difference lies in the fact that elements of $\BVcal_{\loc}(\Mcal)$ are typically formal power series in $\la$, with coefficients that are local polynomials of arbitrary. As an example, we consider the Wick ordered scalar curvature on a background $g_0$.
\begin{multline*}
{\TT_1}_{\Mcal}(\Phi_{\vr{f}})=\int_M R[g_0]\vr{f}(X_{g_0})d\mu_{g_0}+\\
+\la\,\al^{-1}_{H-w}\left( \int_M \vr{f}(X_{g_0})\left.\frac{\delta}{\delta g}(Rd\mu)\right|_{g_0}(h)+\int_M R[g_0]\partial_\mu \vr{f}(x_g) \left.\frac{\delta X^\mu_g}{\delta g}\right|_{g_0}(h)\right)+\Ocal(\hbar^2)\, .
\end{multline*}
For the simplicity of notation we will drop the subscript $\Mcal$ if we keep the background $\Mcal$ fixed and use the notation ${\TT_1}$ instead of ${\TT_1}_{\Mcal}$ for the Wick ordering operator.
\subsection{Interacting theory}
Following \cite{FR3}, we introduce the interaction by means of renormalized time-ordered products. Let $\Delta_D\doteq \frac{1}{2}(\Delta^R+\Delta^A)$ denote the Dirac propagator. By $\fA_{\loc}(\Mcal)$ denote the space $\TT_1(\BVcal_{\loc}(\Mcal)[[\hbar]])$ of Wick ordered local functionals and we define operators $\Tcal_{n}:\fA_{\loc}(\Mcal)^{\otimes n}\rightarrow \fA(\Mcal)$, $n>1$ by means of 
\[
\Tcal_{n}(F_1,\ldots,F_n)=\al_{H-w}^{\minus} (F_1)\T\ldots\T \al_{H-w}^{\minus} (F_n)\,,
\]
for $F_i\in \fA_{\loc}(\Mcal)$ with disjoint supports\footnote{Note that $F_i$, $i=1,\dots,n$ are of the form $\Phi^i_{\vr{f}_i}$  for some locally covariant quantum field $\Phi$. By pairwise disjoint supports we therefore mean that the supports of $\vr{f}_i$ are pairwise disjoint.},  where 
\[
F\T G\doteq m\circ \exp({i\hbar \Gamma'_{\Delta_D}})(F\otimes G),
\]
and we set $\TT_{0}=1$, $\TT_1=\al^{-1}_{H+w}$.
 Maps $\Tcal_{n}$ have to be extended to functionals with coinciding supports and are required to satisfy the standard conditions given in \cite{BDF,H}. In particular, we require graded symmetry, unitarity, scaling properties, $\supp\TT_n(F_1,\dots,F_n)\subset\bigcup\supp F_i$ and causal factorization property:
if the supports of $F_1\ldots F_i$ are later than the supports of $F_{i+1},\ldots F_n$, then we have
\be\label{CausFact}
\TT_{n}(F_1\otimes \dots \otimes F_n)=
\TT_{i}(F_1\otimes \dots \otimes F_i) \star
\TT_{n-i}(F_{i+1} \otimes \dots \otimes F_n) \, .
\ee
Maps satisfying the conditions  above  are constructed inductively, and $\TT_n$ is uniquely fixed by the lower order maps $\TT_k$, $k<n$, up to the addition of an $n$-linear  map
\be
\Zcal_n:\fA_\loc(\Mcal)^n\to\al_{H+w}^{-1}(\fA_\loc(\Mcal))=:\fA_\loc(\Mcal)\,,
\ee
which describes possible finite renormalizations.
In \cite{FR3} it was shown that the renormalized time ordered product can be extended to an associative, commutative binary product defined on the domain $\Dcal_{\TT}(\Mcal)\doteq\TT(\BVcal(\Mcal))$, where $\TT\doteq\oplus_n\TT_n\circ m^{-1}$.
Here $m^{-1}:\BVcal(\Mcal)\to S^\bullet\BVcal^{(0)}_\loc(\Mcal)$ is the inverse of the multiplication, as defined in \cite{FR3,Rej11b}. The only difference is that now we consider functionals that are formal power series in $\la$. $\Dcal_{\TT}(\Mcal)$ contains in particular $\fA_\loc(\Mcal)$ and is invariant under the renormalization group action. Renormalized time ordered products are defined by
\be
F\T G\doteq\TT(\TT^{\minus}F\cdot\TT^{\minus}G)\,,
\ee
and we use the notation $\normOrd{F}\doteq \TT(F)$.
Time ordered products on different spacetimes have to be defined in a covariant way. To show that this can be done, one uses a straightforward generalization of the result of \cite{H} on the existence of covariant time-ordered products for Yang-Mills theories.

Using covariant time-ordered products we can now introduce the interaction.
As indicated in section \ref{free:theory}, we split the action into $L^\ex=L_0+L_I$, where $L_I$ is the interaction term. Let $\vr{f}\doteq(\vr{f}_0,\vr{f}_1)$ be a tuple of test functions chosen in such a way that $\vr{f}_i(X_{g_0})$, $i=0,1$ are compactly supported. We require that $\vr {f}_0\equiv 1$ on $\supp\vr{f}_1$ (compare with the condition preceding \eqref{pairing}) and we have a pairing $L^\ex_{\vr{f}}=L_{0\vr{f}_0}+L_{I\vr{f}_1}$.

The formal S-matrix $\Scal$ is a map from $\fA_{\loc}(\Mcal)$ to $\fA(\Mcal)$ defined as the time-ordered exponential. In particular, we have
\be\label{Smatrix}
\Scal(\normOrd{L_{I\vr{f}}})= e_{\sst{\TT}}^{i\TT L_{I\vr{f}}/\hbar}\,.
\ee
Now we want to construct a local net of $*$-algebras corresponding to the interacting theory on a fixed spacetime $\Mcal$. This is done along the lines of \cite{BDF}, by means of relative S-matrices. For $V,F\in\fA_{\loc}(\Mcal)$ the relative S-matrix is defined by the Bogoliubov formula
\be\label{SV}
\Scal_V(F)\doteq \Scal(V)^{-1}\star\Scal(V+F)\,.
\ee
The infinitesimal version of the above formula allows to define an interacting field corresponding to an observable $F$ under the influence of the interaction $V$:
\be\label{RV}
R_V(F)\doteq -i\hbar \frac{d}{ds}\left.\Scal_V(sF)\right|_{s=0}=\left(e_{\sst{\TT}}^{i\TT V/\hbar}\right)^{\star\minus}\star\left(e_{\sst{\TT}}^{i\TT V/\hbar}\T \TT F\right)\,.
\ee
Unfortunately, we cannot insert directly $\normOrd{L_{I\vr{f}}}$ as $V$, since the resulting interacting fields would in general depend on the choice of the cutoff function $\vr{f}$. One way to do it would be to take the limit $f\rightarrow 1$ directly, in some appropriate topology. This, however, is typically not well under control. Instead we construct the so called ``algebraic adiabatic limit''. 

Let $\Ocal$ be a relatively compact open subregion of the spacetime $\Mcal$. From the support properties of the retarded M{\o}ller operator follows that for $F\in\fA_{\loc}(\Ocal)$, the S-matrix $\Scal_{L_{I\vr{f'}}}(F)$ depends only on the behavior of $f'\doteq \vr{f'}\circ X_{g_0}$ within $J_-(\Ocal)$. Moreover, the dependence on $f'$ in that part of the past which is outside of $J_+(\Ocal$) is described by a unitary transformation which is independent of $F$. Concretely, if $f''=\vr{f}''\circ X_{g_0}$ coincides with $f'$ on a neighborhood of $J^{\diamond} (\Ocal):= J_+(\Ocal) \cap J_-(\Ocal)$, then there exists a unitary $U(f'', f') \in \fA[[\hbar]]$  (formal power series in $\hbar$, $\la$ and possibly $\mu$) such that
\[
\Scal_{L_{I\vr{f''}}}(F)= U(f'', f')\Scal_{L_{I\vr{f}'}}(F)U(f'', f')^{-1}\,,
\]
for all $F\in\fA_{\loc}(\Ocal)$. Hence the algebra generated by the elements of the form  $\Scal_{L_{I\vr{f}'}}(F)$  is, up to isomorphy, uniquely determined by the restriction of $f'$ to the causal completion $J^{\diamond} (\Ocal)$. This defines an abstract algebra $\fA_{L_I[f']}(\Ocal)$, where  $[f']\equiv[f']_\Ocal$ denotes the class of all test functions which coincide with $f'$ on a neighborhood of $J^{\diamond}(\Ocal)$. In fact, $f'$ can be chosen as a smooth function without the restriction on the support. The algebra $\fA_{L_I[f']}(\Ocal)$, is generated by maps

\[
R_{L_{I}[f']}(F):[f']_{\Ocal}\rightarrow \fA(\Mcal),\quad f'\mapsto R_{L_{I\vr{f}'}}(F)=i\hbar\frac{d}{d\la}\Scal_{L_{I\vr{f}'}}(\la F)\Big|_{\la=0}\,.
\]
Now if $\Ocal_1\subset \Ocal_2$, we can then define a map $\fA_{L_I[f']}(\Ocal_1)$ to $\fA_{L_I[f']}(\Ocal_2)$ by taking the restriction of maps $R_{L_{I}[f']_{\Ocal_1}}(F)$ to $[f']_{\Ocal_2}$. For $f'=1$ we denote $\fA_{L_I[1]}(\Ocal)\equiv \fA_{S_I}(\Ocal)$ and analogously $R_{L_{I}[1]}(F)\equiv R_{S_{I}}(F)$ for $F\in\fA_\loc(\Ocal)$. We can now construct the inductive limit $\fA_{S_I}(\Mcal)$ of the net of local algebras  $(\fA_{S_I}(\Ocal))_{\Ocal\subset\Mcal}$. We call this the \textit{algebraic adiabatic limit}. 

Note that for $V\in \BV_{\reg}(\Mcal)$ we can define on  $\BV_\reg(\Mcal)$ a product $\star_V$ as
 \be\label{starV}
 F\star_V G\doteq R_V^{-1}(R_V(F)\star R_V(G))\, .
 \ee
This doesn't work for local arguments, as $R_V^{-1}$ would not be well defined. Instead, we can define $\star_{S_I}$ formally, by setting
 \be\label{starSI}
R_{S_I}( F\star_{S_I} G)\doteq R_{S_I}(F)\star R_{S_I}(G)\, .
 \ee

Let us now come back to quantization of structures appearing in the BV formalism. Following the approach proposed in \cite{FR3},
we define the renormalized time-ordered antibracket on $\TT(\BVcal(\Mcal))$ by
\[
\{X,Y\}_{\sst{\TT}}=\TT\{\TT^{-1}X,\TT^{-1}Y\}\ .
\]
We can also write it as:
\be\label{antibracketTR}
\{X,Y\}_{\TT}=\sum_\al\int\!\left(\!\frac{\delta^r X}{\delta\ph^\al}\T\frac{\delta^l Y}{\delta\ph_\al^\dgr}-(-1)^{|\ph_\al^\dgr|}\frac{\delta^r X}{\delta\ph_\al^\dgr}\T\frac{\delta^l Y}{\delta\ph^\al}\!\right)d\mu\, .
\ee
The above formula  has to be understood as:
\be\label{antibracketTR2}
\{F,G\}_{\TT}\doteq \TT\Big(D^*\Big(\TT ^{-1}\frac{\delta F}{\delta\ph}\otimes \TT ^{-1}\frac{\delta G}{\delta\ph^\ddagger}\Big)\Big)\,,
\ee
where $D^*$ denotes the pullback by the diagonal map and $\big(\TT ^{-1}\frac{\delta F}{\delta\ph}\big)(\ph)$ is a compactly supported distribution (i.e. an element of $\E'(\Mcal)$) defined by
\[
\left<\big(\TT ^{-1}\frac{\delta F}{\delta\ph}\big)(\ph),f\right>\doteq\Big(\TT ^{-1}\Big<\frac{\delta F}{\delta\ph},f\Big>\Big)(\ph)=\Big<\frac{\delta}{\delta\ph}\TT ^{-1}F,f\Big>(\ph)\,,\qquad f\in\E(\Mcal)\,.
\]
In the second step we used the field independence of time ordered products. Since $F\in\TT(\BV(\Mcal))$, the distribution $\big(\TT ^{-1}\frac{\delta F}{\delta\ph}\big)(\ph)$ defined by the above equation is actually a smooth function and the pullback in \eqref{antibracketTR2} is well defined.
Similarly, we define the antibracket with the $\star$-product:
\be\label{antibracketstar}
\{X,Y\}_{\star}=\sum_\al\int\!\left(\!\frac{\delta^r X}{\delta\ph^\al}\star\frac{\delta^l Y}{\delta\ph_\al^\dgr}-(-1)^{|\ph_\al^\dgr|}\frac{\delta^r X}{\delta\ph_\al^\dgr}\star\frac{\delta^l Y}{\delta\ph^\al}\!\right)d\mu\,,
\ee
whenever it exists. Clearly, it is well defined if one of the arguments is regular or equal to $S_0$. The antibracket $\{.,S_0\}_\star$ with the free action defines a $\star$-derivation and, similarly, $\{.,S_0\}_{\sst{\TT}}$ is a $\T$-derivation. A relation between these two is provided by the Master Ward Identity \cite{BreDue,H}:
\be\label{MWI}
\{e_{\sst{\TT}}^{iV/\hbar}\T X,S_0\}_\star=\{e_{\sst{\TT}}^{iV/\hbar}\T X,S_0\}_{\sst{\TT}}+e_{\sst{\TT}}^{iV/\hbar}\TR(i\hbar\Lap_V(X)+\{X,V\}_{\sst{\TT}})\,.
\ee
Now we can use the BV formalism to discuss the gauge invariance in the quantum theory. In the framework of \cite{FR3}, the S-matrix is independent of the gauge fixing-fermion if the quantum master equation (QME) is fulfilled on the level of natural transformations. In terms of the relational observables we use in the present work, this condition means that at each order in $\la$ and $\hbar$,
\be\label{QME2}
\supp\left(e_{\sst{\TT}}^{-i\TT L_{I\vr{f}_1}/\hbar}\T\left\{e_{\sst{\TT}}^{i \TT L_{I\vr{f}_1}/\hbar},L_{0\vr{f}_0}\right\}_{\star}\right)\subset \supp(df_1)\,,
\ee
where $f_1\doteq\vr{f}_1\circ X_{g_0}$.
Using the Master Ward Identity \cite{BreDue,H} and our choice of $\vr{f}_1$, $\vr{f}_0$, we can rewrite the above condition as:
\be\label{qme}
\supp\left(\{L^\ex_{\vr{f}},L^\ex_{\vr{f}}\}+\Lap(L_{I\vr{f}})\right)\subset \supp(df_1)\,,
\ee
where $\Delta(L_{I\vr{f}})$ is the anomaly term, which in the formalism of \cite{FR3}, is interpreted as the renormalized version of the BV Laplacian. 
The condition \eqref{qme} is called \textit{the quantum master equation}. If we redefine time-ordered products in such a way that the anomaly is equal to 0, the above condition is fulfilled. To show that such a redefinition of time-ordered products is possible, one uses a cohomological argument similar to that of \cite{H,FR3}, which reduces the problem of removing the anomaly term to the problem of analyzing the cohomology of $s$ modulo $d$ on local forms (forms constructed locally from the fields of the theory). For the case of gravity  in the metric formulation, the relevant cohomology (i.e. $H^1(s|d)$ on top forms) was computed in \cite{BBH95} (see also earlier work \cite{BTM}, without antifields). In 4 dimensions for pure gravity this cohomology is trivial, so the anomaly can be removed, i.e. one can redefine the time-order products in such a way that \eqref{qme} holds for the new definition of $\Tcal$.

Let us now define the quantum BV operator $\hat{s}$, as a map on $\BVcal(\Mcal)$ given by
\be\label{QBV}
\hat{s}(X)=e_{\sst{\TT}}^{-i\TT L_{I\vr{f}}/\hbar}\T\Big(\left\{e_{\sst{\TT}}^{i\TT L_{I\vr{f}} /\hbar}\T  \TT X,L_{0\vr{f}})\right\}_{\star}\Big)-\{L^\ex_{\vr{f}},L^\ex_{\vr{f}}\}_{\sst\TT}\T \TT X\,,
\ee
where the second term is a correction for the fact that  $\{L^\ex_{\vr{f}},L^\ex_{\vr{f}}\}_{\sst\TT}$ vanishes only for $\vr{f}\rightarrow 1$. The nilpotency of $\hat{s}$ is easily checked by direct computation, with the use of the Jacobi identity for the antibracket and the fact that $\{L^\ex_{\vr{f}},L^\ex_{\vr{f}}\}_{\sst\TT}$ is odd.
From the {\mwi} follows that $\hat{s}$ can be rewritten as
\[
\hat{s}(X)=\{X,S^\ex\}+\Delta_{S_{I}}(X)\,,
\]
so it is local and doesn't depend on the choice of $\vr{f}$. As in \cite{FR3} we have an intertwining property
\be\label{intertwining:s:r}
\{.,S_0\}_\star \circ R_{L_{I\vr{f}}}=R_{L_{I\vr{f}}}\circ\hat{s}+(\textrm{terms that vanish for }d\vr{f}=0)\,,
\ee
hence we can formally state that
\[
\hat{s}=\left.R_{L_{I\vr{f}}}^{-1}\circ\{.,S_0\}_\star \circ R_{L_{I\vr{f}}}\right|_{d\vr{f}=0}\,.
\]
Note that $\hat{s}$ doesn't depend on  the choice of $\vr{f}$ and the intertwining property above suggests that $\hat{s}$ should (at least formally) be a derivation with respect to $\star_{S_I}$. To make this statement precise, we can use the fact that $\hat{s}$ is locally implemented by the BRST charge $Q$ \cite{Rej13}. It is defined as the Noether charge corresponding to the BRST transformation. A concrete formula is provided in Appendix \ref{free:charge}. Let us assume that $\Mcal$ has a compact Cauchy surface. Using the result of \cite{Rej13} we can conclude that
\be\label{chargecom}
R_{L_{I\vr{f}}}(\hat{s}\Phi_{\vr{f}'})=\frac{i}{\hbar}[R_{L_{I\vr{f}}}(\Phi_{\vr{f}'}),R_{L_{I\vr{f}}}(Q)]_{\star}
\ee
holds on-shell for $\Phi_{\vr{f}'}\in\widetilde{\BVcal}(\Ocal)$, where $f'\doteq \vr{f}'\circ X_{g_0}$ is supported in $\Ocal$ and $f\doteq \vr{f}\circ X_{g_0}$ is identically 1 on $\Ocal$. Formally, this can be written as
\[
\hat{s}\Phi_{\vr{f}'}=[\Phi_{\vr{f}'},Q]_{\star_{L_{I\vr{f}}}}\,.
\]
As we are interested in constructing only the local algebras associated to bounded regions $\Ocal\subset\Mcal$, we can always embed such a region in a spacetime with compact Cauchy surfaces. Since the $\star_{L_{I\vr{f}}}$--commutator is local, it doesn't depend on the behavior of $Q$ in the region spacelike to the support of $f'$, so the formula \eqref{chargecom} holds also for spacetimes with non-compact Cauchy surfaces, although $Q$ alone is not well defined (see the remarks in \cite{H} at the end of section 4.1.1).

%

We can now define the space of gauge invariant fields as the 0th cohomology of $(\hat{s},\BVcal(\Mcal))$. This concludes the construction of the algebra of diffeomorphism invariant quantum fields for general relativity.
\section{Background independence}\label{background}
In the previous section we constructed the algebra of interacting observables of effective quantum gravity, by choosing a background and splitting the action into a free and interacting part. Now we prove that the result is independent of that split. In \cite{BF1} it was proposed that a condition of background independence can be formulated by means of the relative Cauchy evolution. Let us fix a spacetime $\Mcal_1=(M,g_1)\in\obj(\Loc)$ and choose $\Sigma_-$ and $\Sigma_+$, two Cauchy surfaces in  $\Mcal_1$, such that $\Sigma_+$ is in the future of $\Sigma_-$. Consider another globally hyperbolic metric $g_2$ on $M$, such that $k\doteq g_2-g_1$ is compactly supported and its support $K$ lies between  $\Sigma_-$ and $\Sigma_+$.  Let us take $\Ncal_{\pm}\in\obj(\Loc)$ that embed into $\Mcal_1$, $\Mcal_2$, via $\chi_{1\pm}$, $\chi_{2\pm}$ and $\chi_{i\pm}(\Ncal_{\pm})$, $i=1,2$ are causally convex 
neighborhoods of  $\Sigma_\pm$ in   $\Mcal_i$. 
We can then use the time-slice axiom to define isomorphisms $\al_{\chi_{i\pm}}\doteq\fA\chi_{i\pm}$ and the free relative Cauchy evolution is an automorphism of $\fA(\Mcal_1)$ given by $\beta_{0g}=\al^{\phantom{-1}}_{0\chi_{1-}}\circ\al^{-1}_{0\chi_{2-}}\circ\al^{\phantom{-1}}_{0\chi_{2+}}\circ\al_{0\chi_{1+}}^{-1}$. It was shown in \cite{BFV} that the functional derivative of $\beta$ with respect to $g$ is the commutator with the free stress-energy tensor. Let us recall briefly that argument, using a different formulation. We can apply $\beta$ to the S-matrix, which works as the generating function for free fields, and calculate the functional derivative using an explicit formula for relative Cauchy evolution. To this end we use the \textit{perturbative agreement} condition introduced by Hollands and Wald in \cite{HW5}. Recently a more general result in this direction was proven in \cite{DHP}. Following these ideas, we use a map $\tau^{\textrm{ret}}:\fA(\Mcal_2)\rightarrow\fA(\Mcal_1)$, such that $\tau^{\textrm{ret}}$ maps $\Phi_{\Mcal_2}(f)$ to 
$\Phi_{\Mcal_1}(f)$ (modulo the image of $\delta_0$), $f\equiv \vr{f}\circ X_{g_0}$, if the support of $f$ lies outside the causal future of $K$. Physically it means that free algebras $\fA(\Mcal_1)$ and $\fA(\Mcal_2)$ are identified in the past of $K$. Analogously, one defines a map  $\tau^{\textrm{adv}}$, which identifies the free algebras in the future. The free relative Cauchy evolution is then given by
\be\label{beta:adv:ret}
\beta_{0g}\doteq\tau^{\textrm{ret}}_{g_1g_2}\circ(\tau^{\textrm{adv}}_{g_1g_2})^{-1}\,,
\ee
As we choose to work off-shell, we define $\tau^{\textrm{ret}}$ as the classical retarded M{\o}ller operator constructed in \cite{DF02}. This definition can be understood as an off-shell extension of the definition given in \cite{HW5}.
The perturbative agreement is a condition that, on shell,
\be\label{perturbative:agreement}
\tau_{g_1g_2}^{\textrm{ret}}\circ\Scal_{2}=\Scal_{S_{0\sst \Mcal_2}-S_{0\sst \Mcal_1}}\qquad\textrm{holds.}
\ee
Here $\Scal_{S_{0\sst \Mcal_1}-S_{0\sst \Mcal_2}}$ denotes the relative S-matrix constructed with the interaction $S_{0\sst \Mcal_1}-S_{0\sst \Mcal_2}$ and the background metric $g_1$, while $\Scal_2$ is the S-matrix constructed on $\Mcal_2$ with the $\Tcal_{\Mcal_2}$ product. More explicitly, we have
\be\label{pa:ret}
\tau_{g_1g_2}^{\textrm{ret}}\left(e_{\sst{\TT_{\sst\Mcal_2}}}^{i\Phi_{\sst\Mcal_2\vr{f}'}/\hbar}\right)\os\left(e_{\sst{\TT_{\sst\Mcal_1}}}^{i(L_{0\sst \Mcal_2}-L_{0\sst \Mcal_1})_{\vr{f}}/\hbar}\right)^{-1}\star_{g_1}\left(e_{\sst{\TT_{\sst\Mcal_1}}}^{i(L_{0\sst \Mcal_2}-L_{0\sst \Mcal_1})_{\vr{f}}/\hbar+i\Phi_{\sst\Mcal_2{\vr{f}'}}/\hbar}\right)\,,
\ee
where $\os$ means ``holds on-shell with respect to free equations of motion'' (i.e. modulo the image of 
$\delta_0$) and, using the notation introduced in the previous section,  $(L_{0\sst \Mcal_1})_{\vr{f}}=(L_{0\sst \Mcal_1})_{\vr{f}_0}$, where $\vr{f}=(\vr{f}_0,\vr{f}_1)$ is a tuple of test functions such that $\vr{f}_0\equiv 1$ on $\supp \vr{f}_1$. We also choose $\vr{f}$ to be identically $(1,1)$ on $\supp \vr{f}'$.

The perturbative agreement condition for $\tau_{g_1g_2}^{\textrm{adv}}$ is analogous to \eqref{pa:ret} and reads:
\be\label{pa:adv}
\tau_{g_1g_2}^{\textrm{adv}}\left(e_{\sst{\TT_{\sst\Mcal_2}}}^{i\Phi_{\sst\Mcal_2\vr{f}'}/\hbar}\right)\os\left(e_{\sst{\TT_{\sst\Mcal_1}}}^{i(L_{0\sst \Mcal_2}-L_{0\sst \Mcal_1})_{\vr{f}}/\hbar+i\Phi_{\sst\Mcal_2{\vr{f}'}}/\hbar}\right)\star_{g_1}\left(e_{\sst{\TT_{\sst\Mcal_1}}}^{i(L_{0\sst \Mcal_2}-L_{0\sst \Mcal_1})_{\vr{f}}/\hbar}\right)^{-1}\,,
\ee
Conditions \eqref{pa:ret} and \eqref{pa:adv} were proven in \cite{HW5} for the case of the free scalar field, but the same argument can be used also for pure gravity. 

To fulfill the perturbative agreement condition,  one fixes the time-ordered product $\TT_{\Mcal_1}$ and shows that there exists a definition of $\TT_{\Mcal_2}$ on the background $\Mcal_2$ compatible with other axioms, such that also \eqref{perturbative:agreement} can be fulfilled. In particular, the quantum master equation holds automatically for $\TT_{\Mcal_2}$ if it holds for $\TT_{\Mcal_1}$. To prove this, we use the off-shell definition of $\tau_{g_1g_2}^{\textrm{ret}}$, given in \cite{DF02}, and from \eqref{perturbative:agreement} it follows that $\tau_{g_1g_2}^{\textrm{ret}}\circ\Scal_{2}(\Phi_{\Mcal_2\vr{f}'})=\Scal_{S_{0\sst \Mcal_2}-S_{0\sst \Mcal_1}}(\Phi_{\Mcal_2\vr{f}'})+I$, where $I$ belongs to the image of $\{.,S_{0\Mcal_1}\}_{\star_{g_1}}$. Let 
\[
V_i\doteq \TT_{\Mcal_i}(L_{\Mcal_i}-L_{0\Mcal_i})_{\vr{f}}\,.
\]
 Since $\tau_{g_1g_2}^{\textrm{ret}}$ is an algebra morphism and it maps $\frac{\delta S_{0\Mcal_2}}{\delta \ph(x)}$ to  $\frac{\delta S_{0\Mcal_1}}{\delta \ph(x)}$, it follows that
\begin{multline*}
\tau_{g_1g_2}^{\textrm{ret}}\left(\left\{e^{iV_2/\hbar}_{\TT_{\Mcal_2}},S_{0\Mcal_2}\right\}_{\star_{g_2}}\right)=\left\{\tau_{g_1g_2}^{\textrm{ret}}\left(e^{iV_2/\hbar}_{\TT_{\Mcal_2}}\right),S_{0\Mcal_1}\right\}_{\star_{g_1}}=\\=
\left\{\left(e_{\sst{\TT_{\sst\Mcal_1}}}^{i(L_{0\sst \Mcal_2}-L_{0\sst \Mcal_1})_{\vr{f}}/\hbar}\right)^{-1}\star_{\sst g_1}\left(e_{\sst{\TT_{\sst\Mcal_1}}}^{i((L_{0\sst \Mcal_2}-L_{0\sst \Mcal_1})_{\vr{f}}+V_2)/\hbar}\right),S_{0\Mcal_1}\right\}_{\star_{g_1}}
\end{multline*}
Now we use the fact that $(L_{0\sst \Mcal_2}-L_{0\sst \Mcal_1})_{\vr{f}}$ doesn't depend on antifields and that $(L_{0\sst \Mcal_2}-L_{0\sst \Mcal_1})_{\vr{f}}+V_2=V_1$. This yields
\[
\tau_{g_1g_2}^{\textrm{ret}}\left(\left\{e^{iV_2/\hbar}_{\TT_{\Mcal_2}},S_{0\Mcal_2}\right\}_{\star_{g_2}}\right)=\left(e_{\sst{\TT_{\sst\Mcal_1}}}^{i(L_{0\sst \Mcal_2}-L_{0\sst \Mcal_1})_{\vr{f}}/\hbar}\right)^{-1}\star_{\sst g_1}\left\{e^{iV_1/\hbar}_{\TT_{\Mcal_1}},S_{0\Mcal_1}\right\}_{\star_{g_1}}=0\,,
\]
so the {\qme} holds for $\TT_{\Mcal_2}$.

Let us go back to the relative Cauchy evolution. 
The functional derivative of $\beta_{0g}$ with respect to $k\doteq g_2-g_1$ can now be easily calculated, yielding
\begin{align*}
&\frac{\delta}{\delta k_{\mu\nu}} \beta_{0g}\left(e_{\sst{\TT_{\sst\Mcal_1}}}^{i\Phi_{{\sst\Mcal_1}\vr{f}'}/\hbar}\right)\Big|_{g_1}\\
&\qquad\qquad\os\frac{i}{\hbar}\left(\left(\frac{\delta (L_{0\sst \Mcal_2})_{\vr{f}}}{\delta k_{\mu\nu}}\Big|_{g_1}\right)\star_{g_1}e_{\sst{\TT_{\sst\Mcal_1}}}^{i\Phi_{{\sst\Mcal_1}\vr{f}'}/\hbar}-e_{\sst{\TT_{\sst\Mcal_1}}}^{i\Phi_{{\sst\Mcal_1}\vr{f}'}/\hbar}\star_{g_1}\left(\frac{\delta (L_{0\sst \Mcal_2})_{\vr{f}}}{\delta k_{\mu\nu}}\Big|_{g_1}\right)\right)\\
&\qquad\qquad=\frac{i}{\hbar}\left[T_{0\mu\nu},e_{\sst{\TT_{\sst\Mcal_1}}}^{i\Phi_{{\sst\Mcal_1}\vr{f}'}/\hbar}\right]_\star\,,
\end{align*}
where $T_{0\mu\nu}$ is the stress-energy tensor of the linearized theory.

Let us now discuss a corresponding construction in the interacting theory. It was conjectured in \cite{BF1} that, for the full interacting theory of quantum gravity, the relative Cauchy evolution should be trivial (equal to the identity map), hence the derivative with respect to $g$ should vanish. Using the  quantum M{\o}ller maps $R_{V_i}$, $A_{V_i}$, $i=1,2$, we can write the interacting relative Cauchy evolution as:
\[
\beta=R_{V_1}^{-1}\circ\tau_{g_1g_2}^{\textrm{ret}}\circ R_{V_2}\circ A_{V_2}^{-1}\circ (\tau_{g_1g_2}^{\textrm{adv}})^{-1}\circ A_{V_1}\,.
\]
We can now formulate the condition of background independence as:
\[
R_{V_1}^{-1}\circ\tau_{g_1g_2}^{\textrm{ret}}\circ R_{V_2}=A_{V_1}^{-1}\circ\tau_{g_1g_2}^{\textrm{adv}}\circ A_{V_2}\,.
\]
Note that we can avoid potential problems with domains of definition of $R_{V_1}^{-1}$ and $A_{V_1}^{-1}$, by rewriting the above condition as
\[
e_{\sst{\TT_{\sst\Mcal_1}}}^{iV_1/\hbar}\star_{g_1}(\tau_{g_1g_2}^{\textrm{ret}}\circ R_{V_2}(\Phi_{\Mcal_2\vr{f}'}))=(\tau_{g_1g_2}^{\textrm{adv}}\circ A_{V_2}(\Phi_{\Mcal_2\vr{f}'}))\star_{g_1}e_{\sst{\TT_{\sst\Mcal_1}}}^{iV_1/\hbar}\,.
\]
Using formulas for $\tau_{g_1g_2}^{\textrm{ret}}$ and $\tau_{g_1g_2}^{\textrm{adv}}$ and the fact that $(L_{0\Mcal_2})_{\vr{f}}+V_2=L^\ex_{\Mcal_2\vr{f}}$, we obtain:
\begin{multline*}
e_{\sst{\TT_{\sst\Mcal_1}}}^{iV_1/\hbar}\star_{g_1}\left(e_{\sst{\TT_{\sst\Mcal_1}}}^{i(L^\ex_{\Mcal_2}-L_{0\Mcal_1})_{\vr{f}}/\hbar}\right)^{-1}\star_{g_1}e_{\sst{\TT_{\sst\Mcal_1}}}^{i(L^\ex_{\Mcal_2}-L_{0\Mcal_1})_{\vr{f}}/\hbar+i\Phi_{\Mcal_2\vr{f}'}/\hbar}\os\\\os
e_{\sst{\TT_{\sst\Mcal_1}}}^{i(L^\ex_{\Mcal_2}-L_{0\Mcal_1})_{\vr{f}}/\hbar+i\Phi_{\Mcal_2\vr{f}'}/\hbar}\star_{g_1}\left(e_{\sst{\TT_{\sst\Mcal_1}}}^{i(L^\ex_{\Mcal_2}-L_{0\Mcal_1})_{\vr{f}}/\hbar}\right)^{-1}\star_{g_1}e_{\sst{\TT_{\sst\Mcal_1}}}^{iV_1/\hbar}
\end{multline*}
Differentiating with respect to ${k}_{\mu\nu}$ yields a condition
\[
[R_{V_1}(\Phi_{\Mcal_1\vr{f}'}),R_{V_1}(T(\eta))]_\star\os0\,,
\]
where 
\[
T(\eta)\doteq \left<{T_{\mu\nu}}_{\vr{f}},\eta^{\mu\nu}\right>=\left<\frac{\delta L^\ex_{\sst \Mcal_2\vr{f}}}{\delta {k}^{\mu\nu}}\Big|_{g_1},\eta^{\mu\nu}\right>
\]
 is the full stress-energy tensor smeared with a test section $\eta$ and we chose $f\equiv 1$ on $\supp \eta$. We can write the above condition in a more elegant way, using the formal notation with $\star_{V_1}$, namely
\[
[\Phi_{\Mcal_1\vr{f}'},T(\eta)]_{\star_{V_1}}\stackrel{\mathrm{o.s._{V_1}}}{=}0\,,
\]
where $\stackrel{\mathrm{o.s._{V_1}}}{=}$ means ``holds on-shell with respect to the equations of motion of the full interacting theory''.
To prove that the infinitesimal background independence is fulfilled, we have to show that $T(\eta)=0$ in the cohomology of $\hat{s}$.
This is easily done, as  
\[
T(\eta)=\left<\frac{\delta S^\ex_{\sst \Mcal_2}}{\delta {k}_{\mu\nu}}\Big|_{g_1},\eta^{\mu\nu}\right>=\left<\frac{\delta S^\ex_{\sst \Mcal_2}}{\delta {h}_{\mu\nu}}\Big|_{g_1},\eta^{\mu\nu}\right>=s\left<h^\ddagger,\eta\right>=\hat{s}\left<h^\ddagger,\eta\right>\,,
\]
where $h$ is the perturbation metric. The last equality follows from the fact that the anomaly can always be removed for linear functionals \cite{BreDue}. This concludes the argument, so the theory is perturbatively background independent.
 \section{States}\label{states}

 Finally we come to the discussion of states. We start with outlining the construction of a state for the full interacting theory for on-shell backgrounds (i.e. backgrounds for which the metric is a solution to Einstein's equations), given a state for the linearised theory. We will use the method proposed in \cite{DF99} which relies on the gauge invariance of the linearized theory under the free BV transformation $s_0$. We have already indicated that this requires the background metric $g_0$ to be a solution of the Einstein's equation, so throughout this subsection we assume that this is indeed the case. The construction we perform is only formal, since we don't control the convergence of interacting fields and we treat them as formal power series in $\hbar$ and $\lambda$. 
 
For a fixed spacetime $\Mcal=(M,g_0)$, we define the quantum algebra $\fA(\Mcal)$ of the free theory as in section \ref{free:theory}. Since we assumed in this subsection that $g_0$ is a solution of Einstein's equation, the free action $L_0$ contains only the term quadratic in $h$. 

Let us assume that we have a representation $\pi_0$ of $\fA(\Mcal)$ on an indefinite product space $\Kcal_0(\Mcal)$ and we denote $\Kcal(\Mcal)\doteq \Kcal_0(\Mcal)[[\hbar,\la]]$. The scalar product $\left<.,.\right>_{\Kcal(\Mcal)}$ on $\Kcal(\Mcal)$ is defined in terms of formal power series in $\hbar$ and $\la$. 
In order to distinguish a subspace of $\Kcal(\Mcal)$ that corresponds to physical states, we will apply the Kugo-Ojima formalism \cite{KuOji0,KuOji} that makes use of the interacting BRST charge $Q_{\textrm{int}}\equiv R_{S_I}(Q)$ to characterize the physical states in $\Kcal$. The nilpotency of $Q$ (as an operator on $\Kcal(\Mcal)$) can be shown by arguments analogous to \cite{H}, postulating appropriate Ward identities. It follows that the 0-th cohomology of $Q$ defines a space closed under the action of physical observables (i.e. under $H^0(\BVcal(\Mcal),\hat{s})$). To see that this is consistent, let us take $\Psi\in \ker(Q_{\textrm{int}})$ and $F\in\BVcal(\Mcal)$. Then
\[
R_{L_{I\vr{f}}}(\hat{s}F)\Psi= [R_{L_{I\vr{f}}}(Q),R_{L_{I\vr{f}}}(F)]_\star\Psi=R_{L_{I\vr{f}}}(Q) F\Psi
\]
holds, i.e. $R_{L_{I\vr{f}}}(\hat{s}F)\Psi\in\im(Q_{\textrm{int}})$, so it vanishes in the cohomology. Vectors belonging to 
 $\ker(Q_{\textrm{int}})$ are constructed perturbatively from the elements of $\ker(Q_0)\subset\Kcal_0(\Mcal)$ by the recursive method introduced in \cite{DF99}. The assumptions on $Q_0$ and $\Kcal_0(\Mcal)$ necessary for this method to work are the following:
 \begin{enumerate}
 \item $\left<\psi,\psi\right>_{\Kcal_0(\Mcal)}\geq 0$, $\forall \psi\in\Kcal_0(\Mcal)$,
 \item If $\psi\in\Kcal_0(\Mcal)$ satisfies $\left<\psi,\psi\right>_{\Kcal_0(\Mcal)}= 0$, then $\psi\in\Kcal_{00}(\Mcal)\equiv \ker Q_0$.
 \end{enumerate}
 It was shown in \cite{DF99} that under these assumptions $\left<.,.\right>_{\Kcal(\Mcal)}$ is positive definite on $\ker Q_{\textrm{int}}\subset \Kcal(\Mcal)$, so $H^0(Q,\Kcal(\Mcal))$ provides formally a Hilbert space representation of $H^0(\BVcal(\Mcal),\hat{s})$.
 
It remains to show that for a given on-shell background $\Mcal=(M,g_0)$ there exists a pre-Hilbert space representation $\Kcal_0(\Mcal)$ of  the quantum linearized theory satisfying the conditions above. This problem hasn't been solved yet in a full generality, but there has been a lot of progress made in the recent years, see for example \cite{Fewster:2012bj,BDM}. A technical problem which we have to face is that construction of Hadamard states is difficult in generic spacetimes. On the other hand, if a background $\Mcal$ has symmetries, it might happen that there is no sensible choice of curvature scalars $X^{\mu}_{g_0}$. Therefore, instead of looking at pure gravity, in concrete models it might be better to consider coupling to matter fields and make the coordinates $X^{\mu}$ dependent on these fields. A natural candidate is the Brown-Kucha\v{r} model \cite{BrK}, where the coordinates are fixed by four scalar ``dust fields''. The construction of the algebra of observables in such a model proceeds analogous to the one presented in this work. We plan to investigate such models in our future work and compare the results to the other approaches to quantum gravity \cite{GravQuant}.
\section{Conclusions and Outlook}
We showed in this paper how the conceptual problems of a theory of quantum
gravity can be solved, on the level of formal power series. The crucial new
ingredient was the concept of local covariance \cite{BFV} by which a theory is
formulated simultaneously on a large class of spacetimes. Based on this
concept, older ideas could be extended
and made rigorous. The construction uses the renormalized Batalin Vilkovisky
formalism as recently developed in \cite{FR3}.

In the spirit of algebraic quantum field theory \cite{Haag} we first
constructed the algebras of local observables. In a theory of gravity, this
is a subtle point, since on a first sight one might think that in view of
general covariance local observables do not exist. We approached this problem
in the following way. Locally covariant fields are, by definition,
simultaneously declared on all spacetimes. These objects then give rise to partial (relational) observables used by Rovelli \cite{Rovelli:2001bz},  Dittrich \cite{Dittrich:2005kc} and Thiemann \cite{Thiemann:2004wk}.  The algebra of observables is defined as being generated by such objects.

The states in the algebraic approach are linear functionals on
the algebra of observables interpreted as expectation values. In gauge theories the algebra of observables is obtained as the cohomology of the BRST differential on an extended algebra.
The usual construction first described by Kugo and Ojima \cite{KuOji78,KuOji0,KuOji} (for an earlier attempt see \cite{CurFer}) starts from a representation of the  extended algebra on some Krein space and an implementation of the BRST differential as the graded commutator with a nilpotent (of order 2) operator (the BRST charge). The cohomology of this operator is then a representation space for the algebra of observables. We followed this approach also here, assuming there exists a representation of the linearized theory, and constructed as in \cite{DF} the full interacting theory as a formal power series in $\hbar$ and $\lambda$.

In this paper we treated pure gravity. It is, however, to be expected that the procedure can be easily extended to include matter fields (scalar, Dirac, Majorana, gauge). It is less clear whether supergravity can be treated in an analogous way. Introducing matter fields will make it easier to construct the dynamical coordinates $X^\mu$, for example like in the Brown-Kucha\v{r} models \cite{BrK}.

On the basis of the formalism developed in this paper one should be able to perform reliable calculations for quantum corrections to classical gravity, under the assumption that these corrections are small and allow a perturbative treatment. There exist already some calculations of corrections, e.g. for the Newton potential \cite{Bjerrum-Bohr} with which these calculations could be compared. It would also be of great interest to adapt the renormalization approach of Reuter et al. (see, e.g., \cite{Reu98,Reu02}) to our framework. Further interesting problems are the validity of the semiclassical Einstein equation (for an older discussion see \cite{Wald}) and the possible noncommutativity of the physical spacetime
\cite{Doplicher-Pinamonti}.

Another possible direction of further study would be to reformulate everything in terms of frames instead of a coordinate systems. The advantage of that is the existence of global frames in a large class of spacetimes, where global coordinate systems do not exist.
\section*{Acknowledgements}
We would like to thank Dorothea Bahns, Roberto Conti and Jochen Zahn for enlightening discussions and comments.
In particular we thank Jochen Zahn for pointing out a gap in our argument for background independence. K. R. is also grateful to INdAM (Instituto Nazionale di Alta Mathematica ``Francesco Severi'') for supporting her research and to the Erwin Schr{\"o}dinger Institute in Vienna for hospitality. Both authors would like to thank one of the referees for suggesting the example of scalar fields coupled to gravity as an illustration of our concept of metric dependent coordinates, which we have followed in section 2.5.  
\appendix
\section{Aspects of classical relativity seen as a locally covariant field theory}\label{class}
In this appendix we discuss some details concerning the formulation of classical relativity in the framework of locally covariant quantum field theory. The first issue concerns the choice of a topology on the configuration space $\E(\Mcal)$. In section \ref{conf:space} we already indicated that a natural choice of such a topology is  $\tau_W$, given by open neighborhoods of the form $U_{g_0,V}=\{g_0+h,h\in V\textrm{ open in }\Gamma_c((T^*M)^{\otimes2})\}$, where $\Gamma_c((T^*M)^{\otimes2})$ is equipped with the standard inductive limit topology. In our case, $\tau_W$ coincides with the Whitney $\mathcal{C}^\infty$ topology, $WO^\infty$, hence the notation. After \cite{Lerner}, Whitney $\mathcal{C}^\infty$ topology  is the initial topology on $\Ci(M,(T^*M)^{\otimes2})$  induced by the graph topology on $\Ci(M,J^\infty(M,(T^*M)^{\otimes2})$ through maps $\Gamma((T^*M)^{\otimes2}) \ni h\mapsto j^\infty h$, where $J^\infty(M,(T^*M)^{\otimes2})$ is the jet space and $j^\infty h$ is the infinite jet of $h$. On the space of all Lorentzian metrics we have also another natural topology, namely  the interval topology $\tau_I$ introduced by Geroch \cite{Geroch}, which is given by intervals $\{g|g_1 \prec g \prec g_2\}$, where the partial order relation  $\prec$ is defined by \eqref{partial:order}, i.e.
 \[
 g' \prec g\  \textrm{if}\  g'(X,X) \geq 0\  \textrm{implies}\  g(X,X) > 0\,.
 \]
The configuration space $\E(\Mcal)$, defined in \eqref{config} is, by definition, an open subset of  $\Lor(M)$, with respect to $\tau_I$.  Moreover, if $g'\in \E(\Mcal)$, then we know that there exists $\la\in\RR$ such that $\la g-g'$ is positive definite, so we can find a neighborhood $V\subset \Gamma_c((T^*M)^{\otimes2})$ of $0$,  such that $g'+h\in \Lor(M)$ and $\la g-g'-h$ is also positive definite. It follows that $g'+h <\la g$ and $g'+V\subset \E(\Mcal)$. This shows that $\E(\Mcal)$ is open also with respect to $\tau_W$. More generally, it was shown in \cite{Lerner} that the $\mathcal{C}^0$ Whitney topology, $WO^0$, on $\Lor(M)$ conincides with the interval topology on the space of continuous Lorentz metrics. This result was than used in \cite{BFR} to show that the space of smooth, time-oriented and globally hyperbolic Lorentzian metrics on $M$ is an open subset of  $\Lor(M)$, with respect to $WO^\infty$.

Functionals on $\E(\Mcal)$ are required to be smooth in the sense of  calculus on locally convex vector spaces, but the relevant topology is the compact open topology $\tau_{CO}$ not the Whitney topology $\tau_W$. More precisely,  let $U$ be an open neighborhood of $h_0$ in the compact open topology $\tau_{CO}$. The derivative of $F$ at $h_0$ in the direction of $h_1\in \Gamma((T^*M)^{\otimes 2})$ is defined as
\be\label{de}
\left<F^{(1)}(h_0),h_1\right> \doteq \lim_{t\rightarrow 0}\frac{1}{t}\left(F(h_0 + th_1) - F(h_0)\right)
\ee
whenever the limit exists. The function $F$ is called differentiable at $h_0$ if $\left<F^{(1)}(h_0),h_1\right>$ exists for all $h_1 \in \E(\Mcal)$. It is called continuously differentiable if it is differentiable at all points of $\E(\Mcal)$ and
$dF:U\times \E(\Mcal)\rightarrow \RR, (h_0,h_1)\mapsto \left<F^{(1)}(h_0),h_1\right>$
is a continuous map. It is called a $\Ccal^1$-map if it is continuous and continuously differentiable. Higher derivatives are defined in a similar way. Note that the above definition means that $F$ is 
smooth, in the sense of calculus on locally convex vector spaces, as a map $U\rightarrow \RR$. It was shown in \cite[Remark 2.3.9]{BFR} that this fits also into the manifold structure on $\E(\Mcal)$ induced by $\tau_W$. To see this, note that a compactly supported functional $F$, defined on a $\tau_W$-open set $U_{g_0,V}$ can be extended to a functional $F\circ \iota_\chi$ defined on an $\tau_{CO}$-open neighborhood $\iota_\chi^{-1}(U_{g_0,V})$ by means of a continuous map $\iota_\chi: (\Gamma((T^*M)^{\otimes 2}),\tau_{CO})\rightarrow (\Gamma((T^*M)^{\otimes 2}),\tau_{W})$, defined by $\iota_\chi(g')\doteq g_0+(g'-g_0)\chi$. From the support properties of $F$ follows that $F\circ \iota_\chi$ is independent of $\chi$. 

In particular,  $F^{(1)}$ defines a kinematical vector field on $\E(\Mcal)$ in the sense of \cite{Michor}. Moreover, since $\E_c(\Mcal)$ is reflexive and has the approximation property, it follows (theorem 28.7 of \cite{Michor})  that kinematical vector fields are also operational i.e., they are derivations of the space of smooth functionals on $\E(\Mcal)$.
 
At the end of section \ref{BVcomplex} we have indicated that the space of multilocal functionals can be extended to a space $\overline{\BV(\Mcal)}$ which is closed under $\Pei{.}{.}_{\tilde{g}}$. Here we give a possible choice for this space. We define $\overline{\BV(\Mcal)}$ to be a subspace of $\BV_{\mc}(\Mcal)$ (defined in section \ref{free:theory}) consisting of functionals $F$, such that the first derivative $F^{(1)}(\ph)$ is a smooth section for all $\ph\in\E(\Mcal)$ and 
$\ph\mapsto F^{(1)}(\ph)$ is smooth as a map $\E(\Mcal)\rightarrow \E(\Mcal)$, where $\E(\Mcal)$ is equipped with its standard Fr{\'e}chet topology. Since the lightcone of $\tilde{g}$ is contained in the interior of the lightcone of $g$, the WF set condition \eqref{mlsc} guarantees that  $\Pei{.}{.}_{\tilde{g}}$ is well defined on  $\overline{\BV(\Mcal)}$. Using arguments similar to \cite{BFR} we can prove the following proposition:
\begin{prop}
The space $\overline{\BV(\Mcal)}$ together with $\Pei{.}{.}_{\tilde{g}}$ is a Poisson algebra. 
\end{prop}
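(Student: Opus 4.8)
The plan is to check the three defining properties of a Poisson algebra for the pair $(\overline{\BV(\Mcal)},\Pei{.}{.}_{\tilde g})$: that $\overline{\BV(\Mcal)}$ is closed under $\Pei{.}{.}_{\tilde g}$, so the bracket is an internal binary operation; that $\Pei{.}{.}_{\tilde g}$ is a graded Lie bracket (graded antisymmetry together with the graded Jacobi identity); and that it acts as a graded biderivation for the pointwise product. Closure under the pointwise product is inherited from $\BV_\mc(\Mcal)$, since a product of microcausal functionals is microcausal and the Leibniz rule $(FG)^{(1)}=F^{(1)}G+FG^{(1)}$ keeps first derivatives smooth and smoothly dependent on $\ph$; so the genuinely new content concerns the bracket.

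First I would establish well-definedness and closure. For $A,B\in\overline{\BV(\Mcal)}$ the derivatives $A^{(1)}(\ph)$, $B^{(1)}(\ph)$ are by definition smooth sections depending smoothly on $\ph$, so $\skal{A^{(1)}(\ph)}{\De_{\tilde g}B^{(1)}(\ph)}$ is finite for each $\ph$: the causal propagator sends smooth sections to sections whose wave front set lies in $C_+\cup C_-$, and pairing against the smooth $A^{(1)}(\ph)$ is unproblematic. To see that the result again lies in $\overline{\BV(\Mcal)}$ I would compute $\Pei{A}{B}_{\tilde g}^{(1)}(\ph)$ by the product and chain rules, differentiating $A^{(1)}$, $B^{(1)}$ and the field-dependent kernel $\De_{\tilde g}$; each summand is a smooth section depending smoothly on $\ph$, by the assumed smoothness of $\ph\mapsto A^{(1)}(\ph)$, $\ph\mapsto B^{(1)}(\ph)$ as maps $\E(\Mcal)\to\E(\Mcal)$ and of $\ph\mapsto\De_{\tilde g}$. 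The microcausal bound \eqref{mlsc} on the higher derivatives is preserved because composition with $\De_{\tilde g}$ cannot produce covectors outside $\Xi_n$, the lightcone inclusion $\tilde g\prec g_0$ ensuring that all relevant cones stay inside the prescribed set.

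Graded antisymmetry and the Leibniz rule are then routine. Antisymmetry follows from the symmetry $\De_{\tilde g}^{\al\beta}(x,y)=-(-1)^{|\al||\beta|}\De_{\tilde g}^{\beta\al}(y,x)$, a direct consequence of $\De_{\tilde g}=\De^{A}_{\tilde g}-\De^{R}_{\tilde g}$ together with the fact that the advanced and retarded Green functions of the formally self-adjoint operator $S''_{\sst\Mcal}$ are mutual (graded) transposes. The graded biderivation property is immediate from the Leibniz rule for functional derivatives and the bilinearity of the pairing.

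The hard part will be the graded Jacobi identity, precisely because $\De_{\tilde g}$ depends on the field. I would prove it by a density-and-continuity argument modelled on \cite{BFR}: the bracket is sequentially continuous on $\overline{\BV(\Mcal)}$ in the H{\"o}rmander topology and $\BV_\reg(\Mcal)$ is sequentially dense in $\BV_\mc(\Mcal)\supset\overline{\BV(\Mcal)}$, so it suffices to verify the cyclic identity on regular functionals, where every pairing reduces to pairings of smooth sections. For a field-independent propagator the cyclic sum vanishes at once by the symmetry of second derivatives and the antisymmetry of $\De$; the only obstruction comes from the terms in which a derivative hits $\De_{\tilde g}$. Differentiating the Green-function relation $S''_{\sst\Mcal}(\ph)\circ\De^{R/A}_{\tilde g}=\id$ expresses $\delta\De_{\tilde g}/\delta\ph$ through $\De^{R/A}_{\tilde g}$ and the symmetric third derivative $S'''_{\sst\Mcal}(\ph)$; substituting this, the field-dependence contributions organize, via the support properties of $\De^{R/A}_{\tilde g}$ and the symmetry of $S'''_{\sst\Mcal}(\ph)$, into a combination that cancels exactly in the cyclic sum. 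The tensorial and graded structure changes only the bookkeeping of indices and Koszul signs relative to the scalar computation of \cite{BFR}, not the mechanism, and sequential continuity then extends the identity from $\BV_\reg(\Mcal)$ to all of $\overline{\BV(\Mcal)}$.
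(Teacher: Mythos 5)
Your overall strategy matches the paper's: closure is checked by computing the first derivative of the bracket and using the hypotheses on $F^{(1)}$, antisymmetry is immediate, and the Jacobi identity rests on the symmetry of $S'''_{\sst\Mcal}(\ph)$ together with the variation of the Green functions. The one genuine divergence is in how you handle Jacobi. You propose to verify the cyclic identity on $\BV_\reg(\Mcal)$ and then extend by density and sequential continuity in the H\"ormander topology. The paper instead observes that the direct computation of \cite{Jac,BFR} --- precisely the mechanism you describe, namely expressing $\delta\De^{R/A}_{\tilde g}/\delta\ph$ through $S'''_{\sst\Mcal}(\ph)$ and cancelling in the cyclic sum --- goes through \emph{verbatim} on all of $\overline{\BV(\Mcal)}$, because the only analytic input it needs is that the products $\De^{R/A}F^{(1)}(\ph)$ be well defined, and that is exactly what the defining condition of $\overline{\BV(\Mcal)}$ (first derivatives are smooth sections) guarantees. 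Your detour through density is therefore unnecessary, and it is not free: you would have to prove that the iterated bracket $\Pei{\Pei{A}{B}_{\tilde g}}{C}_{\tilde g}$ is sequentially continuous in each argument on $\overline{\BV(\Mcal)}$ and that regular functionals are dense in $\overline{\BV(\Mcal)}$ \emph{with respect to a topology controlling the extra condition on first derivatives}, neither of which you establish and neither of which is obviously easier than the direct verification. If you drop the density step and simply note that the computation from \cite{Jac,BFR} applies pointwise in $\ph$ under your standing hypotheses, your argument coincides with the paper's. The closure part is fine as written, though the paper's explicit formula for $(\Pei{F}{G}_{\tilde g})^{(1)}(\ph)$ --- in which the derivative of the field-dependent propagator appears as $\De^{A}F^{(1)}\,S'''\,\De^{R}G^{(1)}$ terms, smooth because $\WF(S''')\perp T\mathrm{Diag}^3(M)$, and the remaining terms are directional derivatives $\tfrac{d}{dt}F^{(1)}(\ph+t\De G^{(1)}(\ph))\big|_{t=0}$ --- makes the smoothness claims you assert by appeal to the chain rule fully concrete.
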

\begin{proof}
First we have to show that $\overline{\BV(\Mcal)}$ is closed under $\Pei{.}{.}_{\tilde{g}}$. It was already shown in \cite{BFR} that $\BV_{\mc}(\Mcal)$ is closed under the Peierls bracket. It remains to show that the additional condition we imposed on the first derivative is also preserved under $\Pei{.}{.}_{\tilde{g}}$. Consider 
\begin{align}\label{Pei:F:G:1}
(\Pei{F}{G}_{\tilde{g}})^{(1)}(\ph)&=\left<F^{(2)}(\ph),\Delta G^{(1)}(\ph)\right>-\left<\Delta F^{(1)}(\ph), G^{(2)}(\ph)\right>\nonumber
\\&\quad-\left<\Delta^AF^{(1)}(\ph), S'''(\ph) \Delta^RG^{(1)}(\ph)\right>
\\&\quad+\left<\Delta^RF^{(1)}(\ph), S'''(\ph) \Delta^AG^{(1)}(\ph)\right>\nonumber\,,
\end{align}
where $ S'''(\ph)$ denotes the third derivative of the action. The last two terms in the above formula are smooth sections, since
the wavefront set of $S'''(\ph)$ is orthogonal to $T\textrm{Diag}^3(M)$ and $\Delta^{R/A}F^{(1)}(\ph)$, $\Delta^{R/A}G^{(1)}(\ph)$ are smooth. The first term of \eqref{Pei:F:G:1} can be written as $\tfrac{d}{dt}F^{(1)}(\ph+th)\Big|_{t=0}$, where $h=\Delta G^{(1)}(\ph)$ is smooth. By assumption, $\ph\mapsto F^{(1)}(\ph)$ is smooth, so the above derivative exists as a smooth section in $\E(\Mcal)$. The same argument can be applied to the second term in \eqref{Pei:F:G:1}, so we can conclude that $\Pei{F}{G}_{\tilde{g}})^{(1)}(\ph)$ is a smooth section. From a similar reasoning follows also that $\ph\mapsto (\Pei{F}{G}_{\tilde{g}})^{(1)}(\ph)$ is a smooth map.

The antisymmetry of $\Pei{.}{.}_{\tilde{g}}$ is clear, so it remains to prove the Jacobi identity. In \cite{Jac,BFR} it was shown that this identity follows from the symmetry of the third derivative of the action, as long as products of the form $\Delta^{R/A}F^{(1)}(\ph)$ are well defined. With our definition of $\overline{\BV(\Mcal)}$ this is of course true, since $F^{(1)}(\ph)$ is required to be a smooth section.
\end{proof}
 \section{BRST charge}\label{free:charge}
 In this section we construct the BRST charge that generates the gauge-fixed BRST transformation $s$. It is convenient to pass from the original Einstein-Hilbert Lagrangian to an equivalent one given by:
\[
L'_{(M,g_0)}(f)(h)=\int\limits_M\textrm{dvol}_{(M,g)}g^{\mu\nu}\left({\Gamma}^\la_{\mu\rho}{\Gamma}^\rho_{\nu\lambda}-{\Gamma}^\rho_{\mu\nu}{\Gamma}^\la_{\rho\lambda}\right)\,.
\]
It differs from the Einstein-Hilbert Lagrangian by a term $\int\limits_M f\nabla_\mu\mathcal{D}^\mu$, where
\[
\mathcal{D}^\mu=\sqrt{-g}(g^{\rho\sigma}{\Gamma}^\mu_{\rho\sigma}-g^{\mu\nu}{\Gamma}^\la_{\nu\la})
\]
and $\Gamma$'s are the Christoffel symbols. 
Let $L$ be the gauge-fixed Lagrangian, where the Einstein-Hilbert term is replaced by $L'$.
The full BRST current corresponding to $\gamma$ is given by the formula:
\[
J^\mu(x)\doteq\sum\limits_{\al}\left(\gamma\ph^\al\frac{\partial{L_\Mcal(x)}}{\partial(\nabla_\mu\ph^\al)}+2\nabla_\nu\gamma\ph^\al\frac{\partial{L_\Mcal(x)}}{\partial(\nabla_\mu\nabla_\nu\ph^\al)}-\nabla_\nu\left(\gamma\ph^\al\frac{\partial{L_\Mcal(x)}}{\partial(\nabla_\mu\nabla_\nu\ph^\al)}\right)\right)-K^\mu_{\Mcal}(x)\,,
\]
where $K_{\Mcal}^\mu$ is the divergence term appearing after applying $\gamma$ to $L_\Mcal(f)$. Using this formula we obtain (compare with \cite{Nish,KuOji78,Nak78}):
\be\label{current}
J^\mu=\sqrt{-g}g^{\mu\la}(b_\rho\nabla_\la c^\rho-(\nabla_\la b_\rho) c^\rho)+\al (b^\rho+ic^\al\nabla_\al\bar{c}^\rho)(b_\rho+ic^\al\nabla_\al\bar{c}_\rho)
+i\sqrt{-g}g^{\mu\la}c^\al c^\rho R_{\la\al\rho}^{\phantom{\la\al\rho}\bet}\bar{c}_\bet \,.
\ee
The free BRST current is given by:
\[
J_0^\mu=\sqrt{-g}g^{\mu\la}(b_\rho\nabla_\la c^\rho-(\nabla_\la b_\rho) c^\rho) \,.
\]
For a spacetime $\Mcal$ with compact Cauchy surface $\Sigma$, for any closed 3-form $\beta$ there exists a closed compactly supported 1-form $\eta$ on $\Mcal$ such that $\int_M\eta\wedge\beta=\int_\Sigma\beta$. In this case we can define the BRST charge as:
 \[
 Q\doteq \int_M\eta\wedge J\,
 \]
 and analogously for the free BRST charge $Q_0$.


\begin{thebibliography}{999}
\bibitem{Avez} A. Avez, \textit{Le laplacien de Lichnerowicz}, C. R. Acad. Sci. Paris. Ser.A Math. {\bf 284} (1977), 1219-1220.
\bibitem{BRZ} D. Bahns, K. Rejzner,  J. Zahn, \textit{The effective theory of strings}, [arXiv.org:math-ph/1204.6263v2].
\bibitem{BBH95} G.~Barnich, F.~Brandt, M.~Henneaux, \textit{General solution of the Wess-Zumino consistency condition for Einstein gravity}, Phys. Rev. D, {\bf 51} (1995), R1435-R1439.
\bibitem{Batalin:1977pb}
I.A.~Batalin, G.A.~Vilkovisky,
\textit{Relativistic S Matrix Of Dynamical Systems With Boson And Fermion
Constraints},
Phys.\ Lett.\ {\bf 69B} (1977) 309.
\bibitem{Batalin:1981jr}
I.A.~Batalin, G.A.~Vilkovisky,
\textit{Gauge Algebra And Quantization},
Phys.\ Lett.\ {\bf 102B} (1981) 27.
\bibitem{BRST1} C. Becchi, A. Rouet, R. Stora, \textit{Renormalization Of The Abelian Higgs-Kibble Model}, Commun. Math. Phys. \textbf{42} (1975) 127. 
\bibitem{BRST2} C. Becchi, A. Rouet, R. Stora, \textit{Renormalization Of Gauge Theories}, Annals Phys. \textbf{98} (1976) 287.
\bibitem{BE} M. Berger, D. Ebin, \textit{Some decompositions of the space of symmetric tensors on a Riemannian manifold}, J. Diff. Geom {\bf 3} (1969) 379.
\bibitem{Berg} P. G. Bergmann, \textit{Observables in general relativity}, Reviews of Modern Physics {\bf 33}
(1961) 510.
\bibitem{BergKom} P. G. Bergmann, A. Komar, \textit{Poisson brackets between locally defined observables in general relativity}, Phys. Rev. Lett {\bf 4} (1960) 432.
\bibitem{BDM}  M.~Benini, C.~Dappiaggi, S.~Murro, \textit{Radiative observables for linearized gravity on asymptotically flat spacetimes and their boundary induced states}, Journal of Mathematical Physics {\bf 55} (2014) 082301.
\bibitem{Bjerrum-Bohr}
  N.~E.~J.~Bjerrum-Bohr, J.~F.~Donoghue, B.~R.~Holstein,
  \textit{Quantum gravitational corrections to the nonrelativistic scattering potential of two masses},
  Phys.\ Rev.\ D {\bf 67} (2003) 084033
   [Erratum-ibid.\ D {\bf 71} (2005) 069903]. 
\bibitem{BS}  N.N. Bogoliubov., D.V. Shirkov, {\it Introduction to the Theory of Quantized Fields}, 
Interscience Publishers, Inc., New York (1959). 
\bibitem{BreDue}F. Brennecke, M. D\"utsch, \textit{Removal of violations of the Master Ward Identity in perturbative QFT}, Rev.Math.Phys. {\bf 20} (2008) 119-172.
\bibitem{BrK} J.~D.~Brown,  K.~V.~Kucha\v{r}, \textit{Dust  as  a  standard  of  space  and  time  in  canonical  quantum gravity}, Phys. Rev. D {\bf 51} (1995), 5600.
\bibitem{BDF}R. Brunetti, M. D\"utsch, K. Fredenhagen, \textit{Perturbative Algebraic Quantum Field Theory and the Renormalization Groups}, Adv. Theor. Math. Phys. {\bf 13} Number 5 (2009) 1541-1599, [arXiv:math-ph/0901.2038v2].
\bibitem{BF0}  R. Brunetti, K. Fredenhagen, \textit{Microlocal Analysis and Interacting Quantum Field Theories: Renormalization on Physical Backgrounds}, Commun. Math. Phys. {\bf 208} (2000) 623-661.
\bibitem{BF1}  R. Brunetti, K. Fredenhagen, \textit{Towards a Background Independent Formulation of Perturbative Quantum Gravity}, proceedings of Workshop on Mathematical and Physical Aspects of Quantum Gravity, Blaubeuren, Germany, 28 Jul - 1 Aug 2005. In Fauser, B. (ed.) et al.: Quantum gravity, 151-159, [arXiv:gr-qc/0603079v3].
\bibitem{BFK95}  R.~Brunetti, K.~Fredenhagen, M.~K\"ohler,
\textit{The microlocal spectrum condition and Wick polynomials of free fields on
  curved spacetimes}, Commun.\ Math.\ Phys.\  {\bf 180} (1996) 633.
\bibitem{BFR} R. Brunetti, K. Fredenhagen, P. Lauridsen-Ribeiro, \textit{Algebraic Structure of Classical Field Theory I: Kinematics and Linearized Dynamics for Real Scalar Fields}, [arXiv:math-ph/1209.2148v2].
\bibitem{BFV} R. Brunetti, K. Fredenhagen, R. Verch, \textit{The generally covariant locality principle - A new paradigm for local quantum field theory}, Commun. Math. Phys. \textbf{237} (2003) 31-68.
\bibitem{BTM} L. Baulieu, J. Thierry-Mieg, \textit{Algebraic structure of quantum gravity and the classification of the gravitational anomalies},  Phys. Lett. {\bf 145 } (1984), 53--60.
\bibitem{CF} B. Chilian, K. Fredenhagen, \textit{The Time Slice Axiom in Perturbative Quantum Field Theory on Globally Hyperbolic Spacetimes}, Commun. Math. Phys. {\bf 287} (2008)  513--522.
\bibitem{CHP09} A.~Coley, S.~Hervik, N.~Pelavas, \textit{Spacetimes characterized by their scalar curvature invariants}, Classical and Quantum Gravity {\bf 26} (2009) 025013.
\bibitem{CurFer} G. Curci, R. Ferrari, \textit{A Canonical and Lorentz Covariant Quantization of Yang-Mills Theories
}, Nuovo Cimento A {\bf35} (1976) 273.
\bibitem{Del} E. Delay, \textit{TT-eigentensors for the Lichnerowicz Laplacian on some asymptotically hyperbolic manifolds with warped products metrics}, Manuscripta Mathematica, {\bf 123}(2007)147-165.
\bibitem{DeWitt:2003pm}
  B.~S.~DeWitt,
\textit{The global approach to quantum field theory Vol. 1, 2},  The Int.\ Ser.\ Monogr.\ Phys., Oxford Science Publications (2003)
\bibitem{Dittrich:2005kc}
  B.~Dittrich, \textit{Partial and complete observables for canonical general relativity},  Class.\ Quant.\ Grav.\  {\bf 23} (2006) 6155.
\bibitem{GravQuant} M.~Domaga\l a, K.~Giesel, W.~Kami\'nski, J.~Lewandowski, \textit{Gravity quantized: loop quantum gravity with a scalar field}, Physical Review D {\bf 82} (2010), 104038.
\bibitem{Doplicher-Pinamonti}
  S.~Doplicher, G.~Morsella, N.~Pinamonti, \textit{On Quantum Spacetime and the horizon problem},
  [arXiv:gr-qc/1201.2519]. 
  \bibitem{DHP}N.~Drago, T.-P.~Hack, N.~Pinamonti, \textit{The generalised principle of perturbative agreement and the thermal mass},
   [arXiv:math-ph/1502.02705].
\bibitem{DueBoas} M. D\"utsch,  F.-M. Boas, \textit{The Master Ward Identity}, Rev. Math. Phys {\bf 14},  (2002) 977-1049.
 \bibitem{DF99} M.~D\"utsch, K.~Fredenhagen, \textit{A local (perturbative) construction of observables in gauge theories: the example of QED}, Commun. Math. Phys. {\bf 203} (1999) 71--105.
\bibitem{DF} M.~D\"utsch, K.~Fredenhagen,
  \textit{Perturbative algebraic field theory, and deformation quantization}, Proceedings of the Conference on Mathematical Physics in Mathematics and Physics, Siena June 20-25 2000, [arXiv:hep-th/0101079].
\bibitem{DF02}
  M.~D\"utsch, K.~Fredenhagen,
  \textit{The master Ward identity and generalized Schwinger-Dyson equation in
  classical field theory},
  Commun.\ Math.\ Phys.\  {\bf 243} (2003) 275.
\bibitem{EG} H. Epstein., V. Glaser, \textit{The role of locality in 
perturbation theory}, Ann. Inst. H. Poincar\'e A {\bf 19} (1973) 211.
\bibitem{Few} C.~J.~Fewster, \textit{Quantum energy inequalities and local covariance II: categorical formulation}, General Relativity and Gravitation \textbf{39} (2007) 1855-1890.
\bibitem{Fewster:2012bj}
  C.~J.~Fewster, D.~S.~Hunt, \textit{Quantization of linearized gravity in cosmological vacuum spacetimes}, Reviews in Mathematical Physics {\bf 25} (2013) 1330003,
  [arXiv:math-ph/1203.0261].
\bibitem{FePf} C.~J.~Fewster, M. J. Pfenning, \textit{A Quantum Weak Energy Inequality for Spin-One Fields in Curved Spacetime}, J. Math. Phys. {\bf 44} (2003) 4480-4513. 
\bibitem{FH} K. Fredenhagen, R. Haag, \textit{On The Derivation Of Hawking Radiation Associated With The Formation Of A Black Hole}, Commun.\ Math.\ Phys.\  {\bf 127} (1990) 273.
\bibitem{FR} K. Fredenhagen, K. Rejzner, \textit{Batalin-Vilkovisky formalism in the functional approach to classical field theory}, Commun. Math. Phys. \textbf{314} (2012) 93-127.
\bibitem{FR2} K. Fredenhagen, K. Rejzner, \textit{Local covariance and background independence}, In ``Quantum Field Theory and Gravity Conceptual and Mathematical Advances in the Search for a Unified Framework'', Birkh\"auser 2012, proceedings of the conference in Regensburg (28 Sep - 1 Oct 2010), [arXiv:math-ph/1102.2376].
\bibitem{FR3} K. Fredenhagen, K. Rejzner, \textit{Batalin-Vilkovisky formalism in perturbative algebraic quantum field theory}, Commun. Math. Phys. \textbf{317} (2013), 697-725.
  \bibitem{FR15}   K.~Fredenhagen, K.~Rejzner, ``QFT on curved spacetimes: axiomatic framework and examples,'' arXiv:1412.5125 [math-ph].
\bibitem{Fried} H.~Friedrich \textit{Is general relativity `essentially understood'?}, Annalen der Physik {\bf 15} (2006), 84-108.
\bibitem{FriedRen} H.~Friedrich, A.~Rendall, \textit{The Cauchy Problem for the Einstein Equations}, in: \textit{Einstein's Field Equations and Their Physical Implications}, edited by B. Schmidt (Lecture Notes in Physics {\bf 540}, Springer, Berlin 2000).
\bibitem{FNW} S.A Fulling, F.J. Narcowich, R. M. Wald, \textit{Singularity structure of the two-point function in quantum field theory in curved spacetime, II}, Annals of Physics,  {\bf 136},  (1981 ) 243-272.
\bibitem{FSW} S.A. Fulling, M. Sweeny, R. M. Wald, \textit{Singularity structure of the two-point function in quantum field theory in curved spacetime}, Commun. Math. Phys. {\bf 63}, (1978) 257-264.
\bibitem{Geroch} R. P. Geroch, \textit{Domain of dependence} J. Math. Phys. {\bf 11} (1970) 437.
\bibitem{Weinberg}
  J.~Gomis, S.~Weinberg,
\textit{Are nonrenormalizable gauge theories renormalizable?},
  Nucl.\ Phys.\ B {\bf 469} (1996) 473.
\bibitem{GIK} C. Guenther, J. Isenberg, D. Knopf, \textit{Stability of the Ricci flow at Ricci-flat metrics}, Comm. Anal. Geom, {\bf 10}, (2002) 741-777.
\bibitem{Haag} R. Haag, {\it Local Quantum Physics}, Springer, 2nd ed.\,
 \bibitem{HK} R. Haag, D. Kastler, \textit{An algebraic approach to quantum
  field theory}, \textit{J.\ Math.\ Phys.}\ {\bf 5}, 848 (1964).
\bibitem{Ham} R. S. Hamilton, \textit{The Inverse Function Theorem of Nash and Moser}, Bull. Amer. Math. Soc. (N.S.) {\bf 7} Number 1 (1982) 65-222.
\cite{Hawking:1974sw}
\bibitem{Hawking:1974sw}
  S.~W.~Hawking, \textit{Particle Creation by Black Holes},
  Commun.\ Math.\ Phys.\  {\bf 43} (1975) 199
   [Erratum-ibid.\  {\bf 46} (1976) 206].
\bibitem{CH10} S.~Hervik, A.~Coley, \textit{Curvature operators and scalar curvature invariants}, Classical and Quantum Gravity {\bf 27} (2010), 095014.
\bibitem{H} S. Hollands, \textit{Renormalized Quantum Yang-Mills Fields in Curved  
Spacetime}, Rev.\ Math.\ Phys.\  {\bf 20} (2008) 1033,
 [arXiv:gr-qc/0705.3340v3].
 \bibitem{HW}  S. Hollands, R. M. Wald, \textit{Local Wick polynomials and time ordered products of quantum fields in curved spacetime}, Commun. Math. Phys. {\bf 223} (2001), 289. 
\bibitem{HW5} S. Hollands, R. M. Wald, \textit{Conservation of the stress tensor in interacting quantum field theory in curved spacetimes}, Rev. Math. Phys. {\bf 17} (2005) 227, [arXiv:gr-qc/0404074].
\bibitem{Hoer} L. H\"ormander, \textit{The analysis of linear partial differential operators I: Distribution theory and Fourier analysis}, Springer 2003.
\bibitem{Jac} S. Jakobs, \textit{Eichbr\"ucken in der klassischen Feldtheorie}, diploma thesis under the supervision of  K. Fredenhagen, Hamburg, February 2009, DESY-THESIS-2009-009.
\bibitem{Lerner} D. E. Lerner, \textit{The space of Lorentz metrics}, Commun.Math. Phys. {\bf 32} (1973) 19--38.
\bibitem{Lich} A. Lichnerowicz, \textit{Propagateurs et commutateurs en relativit{\'e} g{\'e}n{\'e}rale}, Publications Math{\'e}matiques de l'IH{\'E}S {\bf 10 }(1961) 5--56.
\bibitem{Khavkine:2015fwa}
  I.~Khavkine, \textit{Local and gauge invariant observables in gravity},
  [arXiv:gr-qc/1503.03754]. 
\bibitem{Michor}A. Kriegl, P. Michor, \textit{Convenient setting of global analysis}, Mathematical Surveys and Monographs {\bf 53}, American Mathematical Society, Providence 1997. Online version: \verb1http://www.ams.org/online_bks/surv53/1.
\bibitem{KuOji78} T. Kugo, I. Ojima, \textit{Subsidiary conditions and physical $S$-matrix unitarity in indefinite metric quantum gravitational theory}, Nucl.Phys., {\bf  144} (1978) 234.
\bibitem{KuOji0}T. Kugo, I. Ojima, \textit{Manifestly covariant canonical formulation of Yang-Mills theories physical state subsidiary conditions and physical S-matrix unitarity}, Phys. Lett. B {\bf 73} (1978), 459-462.
\bibitem{KuOji}T. Kugo, I. Ojima, \textit{Local covariant operator formalism of non-abelian gauge theories and quark confinement problem}, Suppl. Prog. Theor. Phys. {\bf 66} (1979) 1; Prog. Theor. Phys. {\bf 71} (1984) 1121 (Erratum).
\bibitem{MuSa} O. M{\"u}ller, M. S\'anchez, \textit{Lorentzian manifolds isometrically embeddable in $L^N$}, Transactions of the American Mathematical Society {\bf 363} (2011) 5367-5379.
\bibitem{Naka} N. Nakanishi, \textit{Indefinite-Metric Quantum Field Theory of General Relativity}, Prog. Theor. Phys. {\bf 59} (1978) 972.
\bibitem{Nak78} N. Nakanishi, \textit{Indefinite-metric quantum field theory of general relativity. II}, Progress of Theoretical Physics {\bf 60} (1978), 1190-1203.
\bibitem{NaOji} N. Nakanishi, I. Ojima, \textit{Covariant operator formalism of gauge theories and quantum gravity}, World Scientific Lecture Notes in Physics {\bf 27}, World Scientific 1990.
\bibitem{Neeb} K.-H. Neeb, \textit{Monastir Lecture Notes on Infinite-Dimensional Lie Groups}, \verb1http://www.math.uni-hamburg.de/home/wockel/data/monastir.pdf1.
\bibitem{Nish} K. Nishijima,  M. Okawa, \textit{The Becchi-Rouet-Stora transformation for the gravitational field}, Progress of Theoretical Physics {\bf 60} (1978)  272-283.
\bibitem{Ohl} S. Ohlmeyer, \textit{The measurement of length in linear quantum gravity}, Ph. D. thesis, Hamburg 1997.
\bibitem{Rad}M. J. Radzikowski, \textit{Micro-local approach to the Hadamard condition in quantum field theory on curved space-time}, Commun. Math. Phys. {\bf 179} (1996) 529-553.
\bibitem{Rej} K. Rejzner, \textit{Fermionic fields in the functional approach to classical field theory}, [arXiv:math-ph/1101.5126v1].
\bibitem{Rej11b} K. Rejzner, \textit{Batalin-Vilkovisky formalism in locally covariant field theory}, Ph.D. thesis, DESY-THESIS-2011-041, Hamburg, [arXiv:math-ph/1110.5130].
\bibitem{Rej13} K. Rejzner, \textit{Remarks on local symmetry invariance in perturbative algebraic quantum field theory}, Annales Henri Poincar\'e \textbf{16} (2015), 205-238, [arXiv:math-ph/1301.7037].
\bibitem{Reu98} M. Reuter, \textit{Nonperturbative evolution equation for quantum gravity}, Phys. Rev. D {\bf 57} (1998) 971.
\bibitem{Reu02} M. Reuter, F. Saueressig, \textit{Renormalization group flow of quantum gravity in the Einstein-Hilbert truncation} Phys. Rev. D {\bf 65}  (2002) 065016.  
\bibitem{Sagnotti}  M.~H.~Goroff, A.~Sagnotti,
  \textit{Quantum Gravity At Two Loops},
  Phys.\ Lett.\ B {\bf 160} (1985) 81.
  \bibitem{Rovelli:2001bz}
  C.~Rovelli, \textit{Partial observables},  Phys.\ Rev.\ D {\bf 65} (2002) 124013.
\bibitem{SahlVer} H. Sahlmann, R. Verch, \textit{Microlocal spectrum condition and Hadamard form for vector-valued quantum fields in curved spacetime}, Rev. Math. Phys. {\bf 13} (2001) 1203-1246.
\bibitem{Sesum}N. Sesum, {\it Linear and dynamical stability of Ricci-flat metrics}, Duke Math. J. {\bf 133} (2006) 1-26.
\bibitem{Thiemann:2004wk}
  T.~Thiemann, \textit{Reduced phase space quantization and Dirac observables}, Class.\ Quant.\ Grav.\  {\bf 23} (2006) 1163.
\bibitem{Tonin} M. Tonin, \textit{Dimensional regularization and anomalies in chiral gauge theories}, Nucl. Phys. (Proc. Suppl.) B {\bf 29} (1992) 137.
\bibitem{Tyu} I.V. Tyutin, \textit{Gauge invariance in field theory and statistical physics in operator formalism} (In Russian), Lebedev preprint  (1975) 75-39.
\bibitem{vandeVen:1991gw}
  A.~E.~M.~van de Ven,
\textit{Two loop quantum gravity},  Nucl.\ Phys.\ B {\bf 378} (1992) 309.
\bibitem{Wald}
  R.~M.~Wald,
  \textit{The Back Reaction Effect in Particle Creation in Curved Space-Time},
  Commun.\ Math.\ Phys.\  {\bf 54} (1977) 1. 
\bibitem{Weinberg79} S. Weinberg, \textit{Ultraviolet divergences in quantum theories of gravitation} In General Relativity: An Einstein centenary survey, Ed. S. W. Hawking and W. Israel. Cambridge University Press (1979)  790-831. 
\end{thebibliography}
\end{document}